\documentclass[journal]{IEEEtran}
\usepackage{cite}
\usepackage{amsmath}
\usepackage{graphicx}
\usepackage{float}
\usepackage{caption}
\usepackage{subfigure}
\usepackage{color}
\usepackage{CJK}
\usepackage{algorithm}
\usepackage{algorithmic}
\usepackage{tikz}
\usepackage{epstopdf}
\usepackage{indentfirst}
\usepackage{subfigure}
\usepackage{epstopdf}
\usepackage[latin1]{inputenc}
\usepackage{rotating}
\usepackage{colortbl}
\usepackage{amssymb,amsthm,enumerate,graphicx,float}
\usepackage{booktabs}
\usepackage{hyperref}
\pdfoutput=1
\ifCLASSINFOpdf
\else
\fi
\newtheorem{theorem}{Theorem}

\newtheorem{definition}{Definition}
\newtheorem{lemma}{Lemma}

\newtheorem{assumption}{Assumption}

\newtheorem{remark}{Remark}

\captionsetup[figure]{name={Fig.}}
\captionsetup[table]{name={Table}}
\captionsetup{labelsep=period}

\begin{document}
\title{Distributed Generalized Nash Equilibria Seeking Algorithms Involving Synchronous and Asynchronous Schemes}
\author{Huaqing Li, \IEEEmembership{Senior Member, IEEE}, Liang Ran, Lifeng Zheng, Zhe Li, Jinhui Hu, Jun Li, Tingwen Huang, \IEEEmembership{Fellow, IEEE}
\thanks{The work described in this paper was supported in part by the National Natural Science Foundation of China under Grant 62173278. \emph{(Corresponding author: Huaqing Li.)}

Huaqing Li, Liang Ran, Lifeng Zheng, Zhe Li and Jun Li are with the Chongqing Key Laboratory of Nonlinear Circuits and Intelligent Information Processing, College of Electronic and Information Engineering, Southwest University, Chongqing 400715, China (e-mail: huaqingli@swu.edu.cn, ranliang\_rl@163.com, zlf\_swu@163.com, lizhe\_0415@163.com, jun\_li2023@163.com).

J. Hu is with the Department of Automation, Central South University, Changsha 410083, P. R. China. (e-mail: jinhuihu@csu.edu.cn). He is also with the Department of Biomedical Engineering, City University of Hong Kong, Hong Kong SAR, P. R. China (e-mail:jinhuihu3-c@my.cityu.edu.hk).

Tingwen Huang is with Texas A \& M University at Qatar, Doha 23874, Qatar. (e-mail: tingwen.huang@qatar.tamu.edu).

}}

\maketitle

\begin{abstract}
This paper considers a class of noncooperative games in which the feasible decision sets of all players are coupled together by a coupled inequality constraint. Adopting the variational inequality formulation of the game, we first introduce a new local edge-based equilibrium condition and develop a distributed primal-dual proximal algorithm with full information. Considering challenges when communication delays occur, we devise an asynchronous distributed algorithm to seek a generalized Nash equilibrium. This asynchronous scheme arbitrarily activates one player to start new computations independently at different iteration instants, which means that the picked player can use the involved out-dated information from itself and its neighbors to perform new updates. A distinctive attribute is that the proposed algorithms enable the derivation of new distributed forward-backward-like extensions. In theoretical aspect, we provide explicit conditions on algorithm parameters, for instance, the step-sizes to establish a sublinear convergence rate for the proposed synchronous algorithm. Moreover, the asynchronous algorithm guarantees almost sure convergence in expectation under the same step-size conditions and some standard assumptions. An interesting observation is that our analysis approach improves the convergence rate of prior synchronous distributed forward-backward-based algorithms. Finally, the viability and performance of the proposed algorithms are demonstrated by numerical studies on the networked Cournot competition.
\end{abstract}
\begin{IEEEkeywords}
Noncooperative games, generalized Nash equilibria, asynchronous distributed algorithm, delay communication, operator splitting.
\end{IEEEkeywords}

\IEEEpeerreviewmaketitle

\section{Introduction}

\IEEEPARstart{A}{long} with the penetration of multiagent networks, noncooperative games have  played a more and more active role in many multiagent applications, such as operations research community \cite{Ruud2020Solving}, price-based demand response \cite{Mao2017Price}, energy management \cite{Liang2019Generalized} and automatic control \cite{Ankur2012variational}. In noncooperative games, each player (also called agent) individually controls its own decision, and is also equipped with a cost function determined jointly by decisions of all players. The primary goal of each player is choosing a best decision to optimize its cost selfishly. Generally speaking, the best-response decisions of all players are referred as to the generalized Nash equilibria (GNE) for the game \cite{Patrick1991Generalized,Francisco2007generalized}.

\subsection{Literature Review}

In the past decades, great efforts have been made to seek the GNE. In work \cite{Kaemmler2015GENERALIZED}, an approximation scheme with a path-following strategy was presented to address standard Nash equilibria (NE) problems. In work \cite{Jason2009Joint} to consider potential games, learning-based algorithms were proposed. Due to the limitations of centralized algorithms \cite{Kaemmler2015GENERALIZED,Jason2009Joint}, such as limited flexibility and prohibitively expensive computational and communication, insightful distributed computing frameworks for seeking GNE have increasingly gained attention \cite{Mattia2020Proximal,Belgioioso2021time}.

For games with feasible constraints, a NE seeking strategy that combines projection operations with consensus dynamics was designed in \cite{Distributed2021Yanan}. In work \cite{Maojiao2022Differentially}, a new distributed gradient descent method was presented to minimize the private objectives, but a decaying step-size was used. Meanwhile, a gradient-based updating law was further introduced in \cite{Yipeng2021Distributed} and \cite{Fei2021Distributed}, wherein the former considered constant step-sizes. Inspired by proximal operators, an inexact alternating direction method of multipliers (ADMM) was proposed in \cite{Salehisadaghiani2019Seeking} to pursue NE under partial-decision information. To solve noncooperative games subject to coupled equality constraints, a distributed primal-dual algorithm using leader-following consensus protocol was put forward in \cite{Kaihong2021Nonsmooth}, and another gradient-based projected algorithm was reported in \cite{Kaihong2019Distributed}. Also, in work \cite{Liang2017Coupled}, non-smooth tracking dynamics were leveraged to design an iterative algorithm where a group of players can take simple local information exchange. Note that these algorithms are continuous-time designs. For seeking GNE in discrete-time domain, a distributed primal-dual algorithm as well as its relaxed version was derived in \cite{Yi2019splitting}. In work \cite{Franci2021Distributed}, an elegant distributed forward-backward algorithm was derived to solve the stochastic game. Note that both \cite{Yi2019splitting} and \cite{Franci2021Distributed} take advantage of the forward-backward operator splitting technique to establish the convergence. Other effective splitting techniques can be found in \cite{Pavel2020Distributed,Zhaojian2021Energy}.

All above methods seeking the GNE are confined to synchronous updates, which demands a global coordinated clock and necessitates the utilization of the latest data for next iterations. However, the idle time caused by the communication asynchrony among all players will be inevitable, which leads to a waste of time resource and decreases the efficiency. In contrast, asynchronous implementations have significant advantages, including reducing idle time of communication links, saving computing, and enhancing the robustness of algorithms \cite{Tian2020Achieving}. Recently, related asynchronous algorithms have been proposed for solving cooperative convex optimization problems, such as ASY-SONATA \cite{Tian2020Achieving} and asynchronous PG-EXTRA \cite{Tianyu2018Decentralized}, but not available for noncooperative games. To solve games asynchronously, few efforts have been made in \cite{Lei2017Randomized,Peng2020Asynchronous,Cenedese2019An}. Specifically, an asynchronous inexact proximal best-response scheme was developed in \cite{Lei2017Randomized}, but only for NE seeking. By employing forward-backward splitting technique, two asynchronous distributed GNE seeking algorithms were  presented in \cite{Peng2020Asynchronous}, and another node-based variant was designed in \cite{Cenedese2019An}. However, it is essential to note that: i) the asynchronous method in \cite{Peng2020Asynchronous} under edge-based modes necessitates neighboring players to commonly share edge information, not implementable for independent management of each player's dual information; ii) the asynchronous approaches in \cite{Cenedese2019An} still require a coordinator to collect delayed dual information, which is a challenge of supporting the distributed implementation; iii) although asynchronous convergence is studied in \cite{Peng2020Asynchronous,Cenedese2019An}, the technical challenge to analyze the asynchronous error caused by communication delays remains unclear. It is of great significance to cope with the above challenging issues, which makes contributions for a broad game theory of asynchronous distributed GNE seeking algorithms.

\subsection{Contribution Statements}
In general, the key contributions of this paper are summarized in the following three aspects.

(1) This paper concentrates on the generalized noncooperative games subject to local constraints as well as coupled inequality one. Following the Karush-Kuhn-Tucker (KKT) condition of variational inequalities \cite{Francisco2010Generalized}, a new local equilibrium condition utilizing edge-based dual profiles is first presented. In comparison with these Laplacian-based conditions \cite{Fei2021Distributed,Salehisadaghiani2019Seeking,Liang2017Coupled,Yi2019splitting,Franci2021Distributed,Belgioioso2017Decentralized,Pavel2020Distributed,Peng2020Asynchronous,Cenedese2019An}, this new condition eliminates the (weighted) Laplacian matrices or weight balancing strategies , supporting completely independent data management for each player.

(2) Given the provided new equilibrium condition, a distributed primal-dual proximal algorithm using proximal operators (DPDP\_GNE for short) is first developed. To tackle the delayed issue, the asynchronous iterative algorithm of DPDP\_GNE (ASY\_DPDP\_GNE for short) is designed, which wakes up one random player to perform updates with possibly involved out-dated data, while the inactive players remain unchanged but are allowed to exchange information with its neighbors. Note that ASY\_DPDP\_GNE requires less neighbor information than the asynchronous method in \cite{Peng2020Asynchronous}. More interestingly, our frameworks can readily recover new forward-backward-like algorithms, essentially different from recent forward-backward methods \cite{Yi2019splitting,Franci2021Distributed,Pavel2020Distributed,Peng2020Asynchronous}.

(3) In theoretical aspect, we provide a new analysis approach, showing that the sequence generated by DPDP\_GNE is quasi-Fej{\'{e}}r monotone and its fixed-point residual achieves $o(1/(k))$-convergence, provided that uncoordinated step-sizes are chosen from explicit ranges. Under the same step-size conditions as DPDP\_GNE, we introduce an additional new metric to absorb the delays and explicitly derive monotonic conditional expectations associated with the distances to the fixed point to illustrate the convergence of ASY\_DPDP\_GNE. Notably, our new analysis generalizes these results \cite{Yi2019splitting,Pavel2020Distributed,Peng2020Asynchronous,Cenedese2019An}, and gives improved convergence rates of their synchronous iterations (see Section \ref{DiscTheore}).

\subsection{Organization}
Section \ref{SecFormu} formulates the game model and describes its variational inequality setting. Section \ref{SecAlgoms} develops distributed GNE seeking algorithms, including synchronous and asynchronous versions. Then, the convergence analysis of both proposed algorithms is provided in Section \ref{SecConverg}, and numerical experiments are carried out to evaluate the performance in Section \ref{SecPerfo}. Finally, Section \ref{SecConcl} draws some concluding remarks.

\subsection{Notations}

Throughout this paper, let $\mathbb{N}$, ${\mathbb{R}^n}$ and ${\mathbb{R}^{m \times n}}$ denote the set of non-negative integers, the set of $n$-dimension real vectors and the set of $m \times n $ real matrices, respectively. The non-negative real set is $\mathbb{R}_ + ^n = \{ x \in {\mathbb{R}^n}|{x_1},{\text{ }} \cdots ,{\text{ }}{x_n} \ge 0\} $ with the elements ${x_1},{\text{ }} \cdots ,{\text{ }}{x_m}$ of the vector $x$. Use ${\rm{col}}\{ {x_1},{\text{ }} \cdots ,{\text{ }}{x_m}\} $ to represent the column vector stacked with ${x_1},{\rm{ }}{x_2},{\rm{ }} \cdots {,\rm{ }}{x_m}$, and ${\rm{blkdiag}}\{ {X_1},{\text{ }} \cdots ,{\text{ }}{X_m}\}$ to denote the block diagonal matrix composed of blocks ${X_1},{\rm{ }} \cdots {,\rm{ }}{X_m}$. Denote the condition number of a positive definite matrix $V \in {\mathbb{R}^{n \times n}}$ by $\kappa (V) = {\lambda _{\max }}(V)/{\lambda _{\min }}(V)$, where ${\lambda _{\max }}(V)$ and ${\lambda _{\min }}(V)$ are the maximum and minimum eigenvalues, respectively. For $x,y \in {\mathbb{R}^n}$, let ${\left\langle {x,y} \right\rangle _V} = \left\langle {x,Vy} \right\rangle $ and ${\left\| x \right\|_V} = \sqrt {{{\left\langle {x,x} \right\rangle }_V}}$ represent the inner product and the induced norm, respectively. Considering a non-empty closed convex set ${\mathbf{C}}$, its indicator function is defined by ${\delta _{\mathbf{C}}}(x) =0$ if $x \in {\mathbf{C}}$, otherwise $+ \infty$. Its normal cone is given by ${N_{\mathbf{C}}}(x) = \{ v \in {\mathbb{R}^n}|\langle {v,y - x} \rangle  \leq 0,{\text{ }}\forall y \in {\mathbf{C}}\}$, and its projection operator by ${\mathcal{P}_{\mathbf{C}}}(x) = {\text{argmi}}{{\text{n}}_{y \in {\mathbf{C}}}}{\left\| {y - x} \right\|^2}$.
For a proper, closed, convex function $f:{\mathbb{R}^n} \to \mathbb{R}$, the set of its sub-differential is described as $\partial f(x) = \{ v \in {\mathbb{R}^n}|f(x) + \langle {v,y - x} \rangle  \leq f(y),{\text{ }}\forall y \in {\mathbb{R}^n}\}$, and its conjugate is defined by ${f^ \star }(v) = {\text{sup}}{_{x \in {\mathbb{R}^n}}}\{{v^{\rm T}}x - f(x)\}$. The proximal operator of $f$ is defined by ${\rm{prox}}_f^V(x) = {\rm{argmi}}{{\text{n}}_{z \in {\mathbb{R}^n}}}\{ f(z) + 1/2||x - z||_V^2\} $, where $V$ is positive definite. For an operator $\mathcal{A}:{\mathbb{R}^n} \to {\mathbb{R}^{\tilde n}}$, its zeros and fixed points are denoted by ${\rm{zer (}}\mathcal{A}{\text{)}} = \left\{ {x \in {\mathbb{R}^n}|0 \in \mathcal{A}x} \right\}$ and ${\rm{fix (}}\mathcal{A}{\text{)}} = \left\{ {x \in {\mathbb{R}^n}|x \in \mathcal{A}x} \right\}$, respectively.

\section{Game Formulation and Development} \label{SecFormu}
This section considers a generalized noncooperative game with $m$ players in a communication graph $\mathcal{G}( {\mathcal{V},{\rm{ }}\mathcal{E}} )$, where $\mathcal{V} = \{ {1, \cdots ,m} \}$ is the set of players, and $\mathcal{E} \subseteq \mathcal{V} \times \mathcal{V}$ is the set of edges. Each player $i \in \mathcal{V}$ privately determines its own decision variable ${x_i} \in {\Omega _i}$, where ${\Omega _i} \subseteq {\mathbb{R}^{{n_i}}}$ is a feasible set. Let $x = {\text{col}}\{ {x_1}, \cdots ,{x_m}\}  \in {\mathbb{R}^n}$ be the aggregated variable, with the dimension $n = \sum\nolimits_{i = 1}^m {{n_i}}$. To highlight the $i$-th player's decision within $x$, we rewrite $x$ as $x = ({x_i},{x_{ - i}})$, where ${x_{ - i}} \in {\mathbb{R}^{n - {n_i}}}$ is a stacked vector except the component ${x_i}$. In the game, each player $i \in \mathcal{V}$ privately manages an objective function ${f_i}:{\mathbb{R}^n} \to \mathbb{R}$, in which its own decision is subject to ${\mathcal{X}_i}\left( {{x_{ - i}}} \right) = \left\{ {\left. {{x_i} \in {\Omega _i}} \right|\left( {{x_i},{x_{ - i}}} \right) \in \mathcal{X}} \right\}$. The coupled constraint is given by $\mathcal{X} = \prod\nolimits_{i = 1}^m {{\Omega _i}}  \cap \left\{ {\left. {x \in {\mathbb{R}^n}} \right|\sum\nolimits_{i = 1}^m {{A_i}{x_i}}  \leq \sum\nolimits_{i = 1}^m {{b_i}} } \right\}$ with the local transformer ${A_i}:{\mathbb{R}^q} \to {\mathbb{R}^{{n_i}}}$ and constant vector ${b_i} \in {\mathbb{R}^q}$. For the given decision ${x_{ - i}}$, the goal of player $i \in \mathcal{V}$ is to choose its best-response decision $x_i$ in order to optimize
\begin{flalign}\label{GENProblem}
  \mathop {{\text{minimize}}}\limits_{{x_i} \in {\mathbb{R}^{{n_i}}}} {\text{  }}{f_i}\left( {{x_i},{x_{ - i}}} \right) ,{\text{ }} {\text{subject to  }}{x_i} \in {\mathcal{X}_i}\left( {{x_{ - i}}} \right)
\end{flalign}

For the sequel analysis, we abbreviate the game \eqref{GENProblem} as GNG($\Xi $), and express its solution set (i.e., so-called generalized Nash equilibria) as GNE($\Xi$).
\begin{assumption}\label{AssumSetconvex}
(i) For player $i \in \mathcal{V}$, ${\Omega _i}$ is nonempty, compact and convex; (ii) $A_i$ is linear and bounded, and there exists ${{\hat x}_1} \in {\Omega _1}, \cdots ,{{\hat x}_m} \in {\Omega _m}$ such that $\sum\nolimits_{i = 1}^m {{A_i}{\hat x_i}}  \le \sum\nolimits_{i = 1}^m {{b_i}} $; (iii) Given all feasible $x_{-i}$, ${\mathcal{X}_i}({x_{ - i}})$ is nonempty, compact and convex; (iv) For all feasible $x_{-i}$, ${f_i}({x_i},{x_{ - i}})$ is convex and continuously differentiable at $x_i$.
\end{assumption}

Let ${x^ * } = {\text{col}}\{ x_1^ * , \cdots ,x_m^ * \} $ be a GNE of GNG($\Xi $) if it holds that ${f_i}(x_i^ * ,x_{ - i}^ * ) \leq {f_i}({x_i},x_{ - i}^ * )$ for any ${x_i} \in {\mathcal{X}_i}(x_{ - i}^ * )$. In other words, $x^* \in {\rm{GNE}}(\Xi )$ if and only if there exists a multiplier $\tilde u_i^ *  $ for player $i \in \mathcal{V}$ such that \cite[Theorem 8]{Francisco2010Generalized}
\begin{subequations}\label{GNEKKT}
  \begin{equation}\label{GNEKKTa}
   0 \in {\nabla _{{x_i}}}{f_i}\left( {x_i^ * ,x_{ - i}^ * } \right) + {N_{{\Omega _i}}}(x_i^ * ) + A_i^{\rm T}\tilde u_i^ *
  \end{equation}
  \begin{equation}\label{GNEKKTb}
    0 \in {N_{\mathbb{R}_ + ^q}}(\tilde u_i^ *  ) - \sum\limits_{i = 1}^m {\left( {{A_i}x_i^ *  - {b_i}} \right)} \qquad \quad\,\,
  \end{equation}
\end{subequations}

\begin{remark} \label{LabelAssump1}
Assumption \ref{AssumSetconvex} (i) and (ii) ensure that $\mathcal{X}$ is nonempty, compact and convex, and it is generally interpreted as the bounded property. This is a basic premise for GNG($\Xi $) to get its GNE. Overall, Assumption \ref{AssumSetconvex} guarantees the existence of GNE for game \eqref{GENProblem}. It is a common and standard assumption in \cite{Francisco2007generalized,Liang2019Generalized,Peng2020Asynchronous,Kaihong2019Distributed}.
\end{remark}

\subsection{Practical Scenarios} \label{SubPrSce}
Many practical scenarios can be modeled as the GNG($\Xi $). This section provides two typical practical applications.

\emph{1) Cournot Market Competition \cite{Zheng2023Stochastic,Kannan2012Iterative}}: Consider a factory set $\mathcal{W}$, a commodity set $\mathcal{D}$ and a purchaser set $\mathcal{R}$. Each factory $i \in \mathcal{W}$ produces its commodity $j\in {\mathcal{D}_i}$ and chooses $q_{j,s}^i \in {\Omega _{i,j}}$ amount of commodity to sell to different purchasers $s\in {\mathcal{R}_i}$, where ${\Omega _{i,j}} = \{ 0 \leq q_{j,s}^i \le q_{j,\max }^i\} $ is the production limitation with upper bounder $q_{j,\max }^i$, ${\mathcal{R}_i} \subset \mathcal{R}$ is the commodities set that factory $i$ can produce, and ${\mathcal{D}_i} \subset \mathcal{D}$ is the purchaser set who has a purchasing relation with factory $i$. Once factory $i$ sells its goods, it can gain a revenue ${f_{i,1}}\left( {{q_i},{q_{ - i}}} \right) = \sum\nolimits_{j \in {\mathcal{D}_i}} {\sum\nolimits_{s \in {\mathcal{R}_i}} {\left( {{g_{j,s}} - {\rho _{j,s}}\sum\nolimits_{v \in {\mathcal{Y}_s}} {q_{j,s}^v} } \right)q_{j,s}^v} } $, but it will take the production cost ${f_{i,2}}\left( {{q_i}} \right) = \sum\nolimits_{j \in {\mathcal{D}_i}} {\sum\nolimits_{s \in {\mathcal{R}_i}} {\left( {{a_{i,s}}q_{j,s}^i \cdot q_{j,s}^i + {c_{i,s}}q_{j,s}^i} \right)} }$, where ${q_i} = {\rm {col}}{\left\{ {q_{j,s}^i} \right\}_{j \in {\mathcal{D}_i},s \in {\mathcal{R}_i}}}$ is the production decision determined by factory $i$, ${{g_{j,s}}}$ is purchaser $s$'s preset purchase price for commodity $j$, ${{\rho _{j,s}}}$ is the pricing parameter, ${{\mathcal{Y}_{j,s}}}$ is the set of factories selling commodity $j$ to purchaser $s$, ${{a_{i,s}}}$ and ${{c_{i,s}}}$ are the cost parameters. Notice that all purchasers have the limited storage capacity $\tilde b = {\rm {col}}{\{ {{{\tilde b}_s}} \}_{s \in \mathcal{R}}}$ for all total commodity, leading to the coupled constraint $\sum\nolimits_{i = 1}^m {{A_i}{q_i}}  \leq \tilde b$, where $A_i$ represents the supply relationship defined by the procurement relationship. In effect, factory $i$' payoff function is parametrized by the total sales ${\sum\nolimits_{v \in {\mathcal{Y}_{j,s}}} {q_{j,s}^v} }$, thereby leading to the noncooperative game. Therefore, by setting ${f_i}\left( {{q_i},{q_{ - i}}} \right) = {f_{i,2}}\left( {{q_i}} \right) - {f_{i,1}}\left( {{q_i},{q_{ - i}}} \right)$, the game of the market competition is modeled as
\begin{flalign*}
\mathop {\text{minimize}}\limits_{{q_i} \in \prod\nolimits_{j \in {\mathcal{D}_i}} {{\Omega _{i,j}}}} {\text{ }}{f_{i}}\left( {{q_i},{q_{ - i}}} \right),{\text{ }} {\text{subject to}}  \sum\nolimits_{i = 1}^m {{A_i}{q_i}}  \le \tilde b
\end{flalign*}

\emph{2) Demand Response Management \cite{Mao2017Price}}: There are $m$ price anticipating users in an electricity market. Each user is equipped with an energy-management controller to schedule the electricity usage, and with an advanced metering infrastructure to bidirectionally communicate among electricity users and an energy provider. In practical, each user $i$ needs to schedule its energy consumption $p_i$ within its acceptable range ${\mathcal{X}_i} = \left\{ {{p_i}\left| {{p_{i,\min }} \leq {p_i} \leq {p_{i,\max }}} \right.} \right\}$ to meet the total load demand, i.e., $\sum\nolimits_{i = 1}^m {{p_i}}  = \sum\nolimits_{i = 1}^m {{d_i}} $. Under this setting, user $i$ minimizes its cost ${C_i}\left( {{p_i},{p_{ - i}}} \right) = {C_{i,1}}\left( {{p_i}} \right) + {C_{i,2}}\left( {{p_i},{p_{ - i}}} \right){p_i}$, where ${C_{i,1}}\left( {{p_i}} \right)$ is the load curtailment cost, and ${C_{i,2}}\left( {{p_i},{p_{ - i}}} \right)$ is the price function determined by total energy consumption. Note that the price function leads to the noncooperative game. Hence, by setting ${A_i} = {[ {1, - 1} ]^{\rm T}}$ and ${b_i} = {[ {{{ d}_i}, - {{ d}_i}} ]^{\rm T}}$, the game of the demand response management is modeled as
\begin{flalign*}
\mathop {\text{minimize}} \limits_{{p_i} \in {\mathcal{X}_i}} {\text{ }}{C_i}\left( {{p_i},{p_{ - i}}} \right),{\text{ }} {\text{subject to}} \sum\nolimits_{i = 1}^m {{A_i}{p_i}}  \le \sum\nolimits_{i = 1}^m {{b_i}}
\end{flalign*}

\begin{remark}
Note that recent works \cite{Kaihong2019Distributed,Peng2020Asynchronous} studied the noncooperative games by developing effective distributed GNE seeking algorithms, but only coupled equality constraints are involved. In contrast, the considered coupled inequality constraint in game \eqref{GENProblem} can address the equality one by means of double-sided inequalities (see the second example), and therefore is more general than \cite{Kaihong2019Distributed,Peng2020Asynchronous}.
\end{remark}

\subsection{Variational inequality formulation}

It remains a challenge to directly compute the GNE of GNG($\Xi $) via the condition \eqref{GNEKKT}. Under Assumption \ref{AssumSetconvex}, an (sufficient) equilibrium condition of GNG($\Xi $) is specified as a variational inequality (VI) formulation \cite{Kaihong2021Nonsmooth,Kaihong2019Distributed,Liang2017Coupled,Mattia2020Proximal}. We first define the pseudogradient associated with ${\nabla _{{x_i}}}{f_i}({x_i},{x_{ - i}})$ as $F(x) = {\text{col}}\{ {\nabla _{{x_1}}}{f_1}({x_1},{x_{ - 1}}), \cdots ,{\nabla _{{x_m}}}{f_m}({x_m},{x_{ - m}})\}$. Then the VI problem is to find ${x^ * } \in \mathcal{X}$, such that
\begin{flalign}\label{VIproblem}
\left\langle {F\left( {{x^ * }} \right),x - {x^ * }} \right\rangle  \ge 0,{\text{ }}\forall x \in \mathcal{X}
\end{flalign}

Recalling that $\mathcal{X}$ is nonempty, compact and convex as discussed in Remark \ref{LabelAssump1}. Since ${f_i}({x_i},{x_{ - i}})$ is convex in $x_i$ for any feasible $x_{-i}$, the pseudogradient $F$ is monotone \cite{Scutari2010Convex} and the solution set of problem \eqref{VIproblem} is nonempty, compact and convex \cite[Lemma 3]{Liang2019Generalized}.

\begin{assumption}\label{AssumpseudoLipfucnti}
The pseudogradient $F(x)$ is $L_F$-Lipschitz continuous over $\mathcal{X}$, i.e., $\left\| {F(x) - F(y)} \right\| \leq {L_F}\left\| {x - y} \right\|$ for any $x,y \in \mathcal{X}$. And $F(x)$ is strong monotonicity with $\mu $ over $\mathcal{X}$, i.e., $\langle {x - y,F(x) - F(y)} \rangle  \ge \mu {\left\| {x - y} \right\|^2}$ for any $x,y \in \mathcal{X}$.
\end{assumption}

In brief, we denote the solution set of VI problem \eqref{VIproblem} as ${\rm{SOL}}\left( {\mathcal{X},F} \right)$. Hence, it follows from Lagrangian duality \cite{Facchinei2009Nash} that $x^* \in {\rm{SOL}}\left( {\mathcal{X},F} \right) $ if and only if there exists a global multiplier $u_g^* \in {\mathbb{R}^q}$ such that
\begin{subequations}\label{VIKKT}
  \begin{equation}\label{VIKKTa}
   0 \in {\nabla _{{x_i}}}{f_i}\left( {x_i^ * ,x_{ - i}^ * } \right) + {N_{{\Omega _i}}}(x_i^ * ) + A_i^{\rm T}u _g^ *
  \end{equation}
  \begin{equation}\label{VIKKTb}
    0 \in {N_{\mathbb{R}_ + ^q}}(u_g^ * ) - \sum\limits_{i = 1}^m {\left( {{A_i}x_i^ *  - {b_i}} \right)} \qquad \quad\,\,
  \end{equation}
\end{subequations}

By comparing conditions in \eqref{GNEKKT} and \eqref{VIKKT}, one can confirm that a solution of problem \eqref{VIproblem} is also a GNE of GNG($\Xi$) provided that all multipliers in \eqref{GNEKKT} commonly share the same value, i.e., $\tilde u _1^ *  =  \cdots  = \tilde u _m^ *  = u_g^*$. The next lemma ties with the relationship between GNE($\Xi$) and ${\rm{SOL}}\left( {\mathcal{X},F} \right)$.

\begin{lemma} \cite[Theorem 3.1]{Francisco2007generalized} \label{SoluVItoGNE}
Let Assumption \ref{AssumSetconvex} hold. Every solution ${x^ * }$ of VI is also a GNE of the game \eqref{GENProblem}, i.e., ${x^ * } \in {\rm{SOL}}\left( {\mathcal{X},F} \right) \Rightarrow {x^ * } \in {\rm{GNE}}\left({{\Xi}} \right)$.
\end{lemma}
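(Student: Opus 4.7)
The plan is to give a direct argument that extracts the single-player optimality from the aggregate variational inequality, rather than passing through the KKT systems \eqref{GNEKKT} and \eqref{VIKKT}. The key structural observation is that the coupled set $\mathcal{X}$ is built so that any substitution of the $i$-th block only (while keeping the remaining $x_{-i}^*$ fixed) produces a feasible aggregated vector whenever the new $i$-th block lies in $\mathcal{X}_i(x_{-i}^*)$; this lets the VI be ``tested'' one player at a time.

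First I would fix an arbitrary player $i \in \mathcal{V}$ and an arbitrary $x_i \in \mathcal{X}_i(x_{-i}^*)$, then form the candidate point $\tilde x = (x_i, x_{-i}^*) \in \mathbb{R}^n$. By the very definition of $\mathcal{X}_i(x_{-i}^*) = \{x_i \in \Omega_i \mid (x_i, x_{-i}^*) \in \mathcal{X}\}$, we have $\tilde x \in \mathcal{X}$, so $\tilde x$ is an admissible test point in \eqref{VIproblem}. Substituting $x = \tilde x$ and $x^*$ into \eqref{VIproblem}, the block decomposition $F(x^*) = \mathrm{col}\{\nabla_{x_j} f_j(x_j^*, x_{-j}^*)\}_{j=1}^m$ and the fact that $\tilde x - x^*$ has zero components outside the $i$-th block collapse the inner product to a single term, giving
\begin{equation*}
\langle \nabla_{x_i} f_i(x_i^*, x_{-i}^*),\, x_i - x_i^* \rangle \;\ge\; 0.
\end{equation*}

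Next I would invoke Assumption \ref{AssumSetconvex}(iv): for fixed $x_{-i}^*$, the map $x_i \mapsto f_i(x_i, x_{-i}^*)$ is convex and continuously differentiable, so it admits the subgradient inequality
\begin{equation*}
f_i(x_i, x_{-i}^*) - f_i(x_i^*, x_{-i}^*) \;\ge\; \langle \nabla_{x_i} f_i(x_i^*, x_{-i}^*),\, x_i - x_i^* \rangle \;\ge\; 0.
\end{equation*}
Since $x_i \in \mathcal{X}_i(x_{-i}^*)$ was arbitrary and $x_i^* \in \mathcal{X}_i(x_{-i}^*)$ (because $x^* \in \mathcal{X}$), this is exactly the GNE defining inequality $f_i(x_i^*, x_{-i}^*) \le f_i(x_i, x_{-i}^*)$ for every $x_i \in \mathcal{X}_i(x_{-i}^*)$. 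Repeating the argument for every $i \in \mathcal{V}$ yields $x^* \in \mathrm{GNE}(\Xi)$.

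The main delicacy, and really the only non-cosmetic point in the argument, is the feasibility step $\tilde x \in \mathcal{X}$: this relies on the coupling constraint $\sum_{j} A_j x_j \le \sum_{j} b_j$ being preserved when we replace only the $i$-th block by an element of $\mathcal{X}_i(x_{-i}^*)$, which is automatic from the definition of $\mathcal{X}_i$ used in the problem statement. I would not attempt to prove the converse implication here, because it is exactly the classical fact that $\mathrm{SOL}(\mathcal{X}, F) \subsetneq \mathrm{GNE}(\Xi)$ in general and coincidence requires the common-multiplier condition $\tilde u_1^* = \cdots = \tilde u_m^* = u_g^*$ highlighted between \eqref{GNEKKT} and \eqref{VIKKT}; the lemma only claims the (easier) inclusion.
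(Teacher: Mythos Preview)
Your argument is correct and is precisely the standard proof of this inclusion (see the cited reference, Facchinei--Kanzow 2007, Theorem 3.1). The paper itself does not supply a proof of Lemma~\ref{SoluVItoGNE}; it is quoted as a known result, so there is nothing to compare against beyond noting that your direct ``test one block at a time'' argument is exactly the classical one.
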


\begin{remark}
Based on Assumption \ref{AssumSetconvex}, the monotonicity of $F$ as well as the convexity, nonempty and compactness of ${\cal X}$ contribute to the existence of ${\rm{SOL}}\left( {\mathcal{X},F} \right)$ \cite{Francisco2010Generalized}. The strong monotonicity of $F$ in Assumption \ref{AssumpseudoLipfucnti} further allows us to claim the uniqueness of ${\rm{SOL}}\left( {\mathcal{X},F} \right)$ \cite{Scutari2010Convex}.
\end{remark}

\subsection{Local Equilibrium Condition}

This section extends the condition \eqref{VIKKT} and describes a new local equilibrium condition by introducing a group of local dual profiles ${u_1}, \cdots ,{u_m}$. Players maintain local multipliers and exchange them with neighbors over $\mathcal{G}$, in order to learn the Lagrange multiplier $u_g^*$. We make the following assumption on communication network.
\begin{assumption}\label{AsCommu}
The graph $\mathcal{G}$ is undirected and connected.
\end{assumption}

Based on Assumption \ref{AsCommu}, to highlight the edge $\left( {i,j} \right) \in {\cal E}$ linking with players $i$ and $j$, we define the matrix ${E_{ij}} = I \in {\mathbb{R}^{q \times q}}$ if $i<j$, and ${E_{ij}} = -I \in {\mathbb{R}^{q \times q}}$ if $i>j$. Then, neighboring players $i$ and $j$ learn the Lagrange multiplier $u_g^*$ over the following edge-based consensus constraint:
\begin{flalign}\label{EijEji}
{E_{ij}}{u_i} + {E_{ji}}{u_j} = 0,{\text{ }}\forall (i,j) \in \mathcal{E}
\end{flalign}
To proceed, we define the set ${\mathcal{C}_{\left( {i,j} \right)}} = \left\{ {\left( {{z_1},{z_2}} \right) \in {\mathbb{R}^q} \times {\mathbb{R}^q}\left| {{z_1} + {z_2} = 0} \right.} \right\}$ and the mapping ${\Pi_{\left( {i,j} \right)}}:{\text{ }}u \to \left( {{E_{ij}}{u_i},{\text{ }}{E_{ji}}{u_j}} \right)$, where $u = {\text{col}}\{ {u_1}, \cdots ,{u_m}\}  \in {\mathbb{R}^{mq}}$. Then, the coupled constraint \eqref{EijEji} can be equivalently cast as the set form ${\Pi _{\left( {i,j} \right)}}u \in {\mathcal{C}_{\left( {i,j} \right)}}$.

For each edge $\left( {i,j} \right) \in \mathcal{E}$, we consider the variable ${w_{(i,j)}} = {\text{col}}\{ {w_{(i,j),i}},{w_{(i,j),j}}\}  \in {\mathbb{R}^{2q}}$, where ${w_{(i,j),i}} \in {\mathbb{R}^q}$ and ${w_{(i,j),j}} \in {\mathbb{R}^q}$ are respectively kept by $i$ and $j$, used for reaching consensus of the local multipliers. Consequently, a new local equilibrium condition can be established via the following lemma.
\begin{lemma}\label{DistriMN}
Let Assumptions \ref{AssumSetconvex}-\ref{AsCommu} hold. Consider the point $\left( {{x^ * },{u^ * },{w^ * }} \right)$, where ${x^ * } = {\rm{col}}\{ x_i^ * \} _{i = 1}^m$, ${u^ * } = {\rm{col}}\{ u_i^ * \} _{i = 1}^m$ and ${w^ * } = {\rm{col}}{\{ w_{\left( {i,j} \right)}^ * \} _{\left( {i,j} \right) \in \mathcal{E}}}$ with $w_{\left( {i,j} \right)}^ *  = {\rm{col}}\{ w_{\left( {i,j} \right),i}^ * ,w_{\left( {i,j} \right),j}^ * \} $. If each player's local components $(x_i^*,u_i^*,{\{ w_{(i,j),i}^*\} _{j \in {{\cal N}_i}}})$ meet
\begin{subequations}\label{DisMNcondit}
  \begin{equation}\label{DisMNcondita}
   0 \in {\nabla _{{x_i}}}{f_i}\left( {x_i^ * ,x_{ - i}^ * } \right) + {N_{{\Omega _i}}}(x_i^ * ) + A_i^{\rm T}u_i^ * \qquad\quad
  \end{equation}
  \begin{equation}\label{DisMNconditb}
  0 \in {N_{\mathbb{R}_ + ^q}}(u_i^ * ) + {b_i} - {A_i}x_i^ *  + \sum\nolimits_{j \in {\mathcal{N}_i}} {E_{ij}w_{\left( {i,j} \right),i}^ * }
  \end{equation}
  \begin{equation}\label{DisMNconditc}
   0 \in \partial \delta _{{\mathcal{C}_{\left( {i,j} \right)}}}^ \star ( {w_{( {i,j} )}^ * } ) - {\Pi_{\left( {i,j} \right)}}{u^ * } \qquad\quad\qquad\quad\quad\,\,\,
  \end{equation}
\end{subequations}
Then, $u_1^ * , \cdots ,u_m^ * $ reach consensus, and ${x^ * } \in {\rm{SOL}}\left( {\mathcal{X},F} \right)$.
\end{lemma}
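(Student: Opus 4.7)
The plan is to deduce the two VI KKT conditions \eqref{VIKKTa}--\eqref{VIKKTb} from the local equilibrium relations \eqref{DisMNcondita}--\eqref{DisMNconditc} and then invoke Lemma \ref{SoluVItoGNE}. The argument splits naturally into a consensus step, which handles \eqref{DisMNconditc}, and a summation step, which handles \eqref{DisMNconditb}. Condition \eqref{DisMNcondita} carries over to \eqref{VIKKTa} essentially by substitution once consensus is available.

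For the consensus step, I would unpack \eqref{DisMNconditc} using the Fenchel--Young identity. Since $\delta_{\mathcal{C}_{(i,j)}}$ is proper, closed and convex, we have $\Pi_{(i,j)} u^{\ast}\in\partial\delta_{\mathcal{C}_{(i,j)}}^{\star}(w_{(i,j)}^{\ast})$ if and only if $w_{(i,j)}^{\ast}\in\partial\delta_{\mathcal{C}_{(i,j)}}(\Pi_{(i,j)} u^{\ast}) = N_{\mathcal{C}_{(i,j)}}(\Pi_{(i,j)} u^{\ast})$. In particular, the normal cone being nonempty forces $\Pi_{(i,j)} u^{\ast}\in\mathcal{C}_{(i,j)}$, that is $E_{ij}u_{i}^{\ast}+E_{ji}u_{j}^{\ast}=0$; by the sign convention $E_{ij}=\pm I$ this collapses to $u_{i}^{\ast}=u_{j}^{\ast}$ on each edge. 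Assumption \ref{AsCommu} (undirected, connected $\mathcal{G}$) then propagates this along a spanning tree, yielding a common value $u_{g}^{\ast}\mathrel{\mathop:}=u_{1}^{\ast}=\cdots=u_{m}^{\ast}$, which automatically lives in $\mathbb{R}_{+}^{q}$ thanks to \eqref{DisMNconditb}.

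With consensus in hand, substituting $u_{i}^{\ast}=u_{g}^{\ast}$ into \eqref{DisMNcondita} immediately gives \eqref{VIKKTa}. To obtain \eqref{VIKKTb}, I would sum \eqref{DisMNconditb} over $i=1,\ldots,m$ and show that the edge-dual terms $\sum_{i}\sum_{j\in\mathcal{N}_{i}}E_{ij}w_{(i,j),i}^{\ast}$ telescope to zero. The key observation is that $\mathcal{C}_{(i,j)}$ is a linear subspace, so its normal cone at any point is the orthogonal complement $\{(\lambda,\lambda):\lambda\in\mathbb{R}^{q}\}$, which forces $w_{(i,j),i}^{\ast}=w_{(i,j),j}^{\ast}$. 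Combined with the skew-symmetry $E_{ij}+E_{ji}=0$, each edge contributes $(E_{ij}+E_{ji})w_{(i,j),i}^{\ast}=0$ to the global sum. What remains is $\sum_{i}(A_{i}x_{i}^{\ast}-b_{i})\in\sum_{i=1}^{m}N_{\mathbb{R}_{+}^{q}}(u_{g}^{\ast})$, and since $N_{\mathbb{R}_{+}^{q}}(u_{g}^{\ast})$ is a convex cone the $m$-fold sum collapses back to itself, delivering \eqref{VIKKTb}. The pair \eqref{VIKKTa}--\eqref{VIKKTb} is exactly the KKT system for the VI, so $x^{\ast}\in\mathrm{SOL}(\mathcal{X},F)$ by Lemma \ref{SoluVItoGNE}.

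The main obstacle is setting up the first step cleanly: one must correctly convert the conjugate-subdifferential inclusion in \eqref{DisMNconditc} into a normal-cone inclusion via Fenchel--Young, then exploit the linear-subspace structure of $\mathcal{C}_{(i,j)}$ to both (a) identify its normal cone as a diagonal subspace and (b) arrange the cancellation with the skew-symmetric edge matrices $E_{ij}$. Once this machinery is in place, the remainder is routine summation and appeals to standard cone arithmetic.
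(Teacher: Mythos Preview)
Your argument is correct and reaches the same two intermediate facts the paper needs, namely $u_i^{\ast}=u_j^{\ast}$ and $w_{(i,j),i}^{\ast}=w_{(i,j),j}^{\ast}$ for every edge, after which the derivation of \eqref{VIKKTa}--\eqref{VIKKTb} by substitution and summation is identical to the paper's. The route you take to extract those two facts from \eqref{DisMNconditc}, however, is different. The paper rewrites \eqref{DisMNconditc} as a proximal fixed point $w_{(i,j)}^{\ast}=\mathrm{prox}_{\kappa_{(i,j)}\delta_{\mathcal{C}_{(i,j)}}^{\star}}(w_{(i,j)}^{\ast}+\kappa_{(i,j)}\Pi_{(i,j)}u^{\ast})$, applies Moreau decomposition to turn this into a projection identity, and then inserts the explicit formula for $\mathcal{P}_{\mathcal{C}_{(i,j)}}$ to read off both equalities from a $2\times 1$ block equation. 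Your approach bypasses the proximal machinery entirely: you invert the subdifferential via Fenchel--Young to land directly in $N_{\mathcal{C}_{(i,j)}}(\Pi_{(i,j)}u^{\ast})$, then use that the normal cone to the linear subspace $\mathcal{C}_{(i,j)}$ is its orthogonal complement $\{(\lambda,\lambda)\}$. This is shorter and more transparent, and it makes clear that only the subspace structure of $\mathcal{C}_{(i,j)}$ is being used; the paper's computation buys nothing extra here but does mirror the proximal steps that appear later in the algorithm design.

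One small correction: your final appeal to Lemma~\ref{SoluVItoGNE} is misplaced. That lemma gives $\mathrm{SOL}(\mathcal{X},F)\subseteq\mathrm{GNE}(\Xi)$, whereas what you need is the converse direction of the KKT characterization, i.e., that any pair $(x^{\ast},u_g^{\ast})$ satisfying \eqref{VIKKT} yields $x^{\ast}\in\mathrm{SOL}(\mathcal{X},F)$. The paper states this equivalence just before \eqref{VIKKT} (attributed to Lagrangian duality), and that is the correct reference; Lemma~\ref{SoluVItoGNE} is not needed for the statement being proved.
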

\begin{proof}
See Appendix \ref{ProofDistriMN}.
\end{proof}

\begin{remark}
Note that ${\Pi_{\left( {i,j} \right)}}{u^ * } = {\rm{col}} ( {{E_{ij}}u_i^ * ,{E_{ji}}u_j^ * } )$, the new condition \eqref{DisMNcondit} is thus totally local. In spite of well-known Laplacian-based condition in \cite{Cenedese2019An,Peng2020Asynchronous}, the compact expression is studied and local condition for each player is difficult to be obtained due to the coupled structure of Laplacian matrices. By contrast, the edge-based form \eqref{EijEji} facilitates the explicit presentation of condition \eqref{DisMNcondit}, thereby contributing to the development of distributed algorithms. Lemma \ref{DistriMN} indicates that $(x^ * ,u^ * )$ satisfying \eqref{DisMNcondit} is always accessible to the condition system \eqref{VIKKT}. The relations among the previous formulations are summarized as: $(x^ * ,u^ * )$ satisfies \eqref{DisMNcondit} $\Rightarrow $ ${x^ * } \in {\rm{SOL}}\left( {\mathcal{X},F} \right)$ via Lemma \ref{DistriMN} $\Rightarrow $ ${x^ * } \in {\rm{GNE}}\left({{\Xi}} \right)$ via Lemma \ref{SoluVItoGNE}.
\end{remark}

\section{Distributed Algorithms For Seeking GNE} \label{SecAlgoms}

\subsection{Synchronous Algorithm} \label{SubSynchr}

In the following, the details of designing distributed algorithm for GNE are briefly mentioned below. We assume that each player can access decisions that its local objective function directly depends on, obtained via an additional communication graph (termed as interference graph) \cite{Yi2019splitting,Cenedese2019An}. Consequently, we specify an interference graph allowing each player to observe the decisions influencing its objective function and get its local gradient. Each player $i \in \mathcal{V}$ controls and iteratively updates the private variables $x_i$, $u_i$ and edge-based ones ${w_{(i,j),i}}$. To reach the consensus, player $i$ needs to conduct local information exchange with its neighbors, i.e., sharing the latest data ${E_{ij}}{u_i}$ and ${w_{(i,j),i}}$ with $j \in {\mathcal{N}_i}$. Here, we introduce three auxiliary variables, ${\bar w_{\left( {i,j} \right)}} = {\text{col}}\{ {\bar w_{\left( {i,j} \right),i}},{\bar w_{\left( {i,j} \right),j}}\} $, ${\bar u_i}$ and ${\bar x_i}$, corresponding to ${w_{\left( {i,j} \right)}}$, ${u_i}$ and ${x_i}$ respectively. By exploiting proximal operators, three auxiliary variables take the following recursions:
\begin{subequations}\label{Ppredphras}
  \begin{equation}\label{Ppredphrasa}
   {{\bar w}_{\left( {i,j} \right)}} = {\text{pro}}{{\text{x}}_{{\kappa _{\left( {i,j} \right)}}\delta _{{\mathcal{C}_{\left( {i,j} \right)}}}^ \star }}( {{w_{\left( {i,j} \right)}} + {\kappa _{\left( {i,j} \right)}}{\Pi_{\left( {i,j} \right)}}u} ) \qquad\quad\,\,\,
  \end{equation}
  \begin{equation}\label{Ppredphrasb}
  {{\bar u}_i} \!=\! {\text{pro}}{{\text{x}}_{\sigma_i \delta _{{\mathbb{R}_ + ^q}} }}( {{u_i} \!+\! {\sigma _i}( {{A_i}{x_i} \!-\! {b_i} \!-\!\! \sum\nolimits_{j \in {\mathcal{N}_i}} {E_{ij}{{\bar w}_{\left( {i,j} \right),i}}} } )} )
  \end{equation}
  \begin{equation}\label{Ppredphrasc}
   {{\bar x}_i} = {\text{pro}}{{\text{x}}_{{\tau _i}{\delta _{{\Omega _i}}}}}( {{x_i} - {\tau _i}( {{\nabla _{{x_i}}}{f_i}( {{x_i},{x_{ - i}}} ) + A_i^{\rm T}{{\bar u}_i}} )} ) \,\,\qquad
  \end{equation}
\end{subequations}
where ${\omega _{\left( {i,j} \right)}},{\sigma _i}$ and ${\tau _i}$ are positive constant step-sizes. Due to the accessibility of proximal operators of ${\delta _{\mathbb{R}_ + ^q}}$ and ${{\delta _{{\Omega _i}}}}$, \eqref{Ppredphrasb} and \eqref{Ppredphrasc} can reduce to
\begin{flalign}\label{Ppredphrasbarux}
\begin{gathered}
  {{\bar u}_i} = {\mathcal{P}_ {\mathbb{R}_ + ^q} }( {{u_i} \!+\! {\sigma _i}( {{A_i}{x_i} \!-\! {b_i} \!-\! \sum\nolimits_{j \in {\mathcal{N}_i}} {{E_{ij}}{{\bar w}_{\left( {i,j} \right),i}}} } )} ) \hfill \\
  {{\bar x}_i} = {\mathcal{P}_{{\Omega _i}}}\left( {{x_i} - {\tau _i}\left( {{\nabla _{{x_i}}}{f_i}\left( {{x_i},{x_{ - i}}} \right) + A_i^{\rm T}{{\bar u}_i}} \right)} \right) \hfill \\
\end{gathered}
\end{flalign}
In addition, with the help of Moreau decomposition \cite{Bauschke2011Convex}, the relationship \eqref{Ppredphrasa} for all $j \in {\mathcal{N}_i}$  can be decomposed into
\begin{flalign}\label{Ppredphrasbarw}
\begin{gathered}
  {{\bar w}_{\left( {i,j} \right)}} = {w_{\left( {i,j} \right)}} + {\kappa _{\left( {i,j} \right)}}{\Pi _{\left( {i,j} \right)}}u \hfill \\
  \qquad\quad\,\,\, - {\kappa _{\left( {i,j} \right)}}{\mathcal{P}_{{\mathcal{C}_{\left( {i,j} \right)}}}}(\kappa _{\left( {i,j} \right)}^{ - 1}{w_{\left( {i,j} \right)}} + {\Pi _{\left( {i,j} \right)}}u) \hfill \\
\end{gathered}
\end{flalign}
Note that the projection of $({z_1},{z_2}) \in {\mathbb{R}^n} \times {\mathbb{R}^n}$ onto ${{\mathcal{C}_{(i,j)}}}$ is derived as ${\mathcal{P}_{{\mathcal{C}_{(i,j)}}}}:({z_1},{z_2}) \to 1/2 ({z_1} - {z_2},{z_2} - {z_1})$. Recalling the mapping ${\Pi_{(i,j)}}$, and expanding \eqref{Ppredphrasbarw}, the iterative rule of ${\bar w_{\left( {i,j} \right),i}}$ takes
\begin{flalign}\label{Ppredphrasbarwiji}
{{\bar w}_{\left( {i,j} \right),i}} \!=\! \frac{1}
{2}({w_{\left( {i,j} \right),i}} \!+\! {w_{\left( {i,j} \right),j}}) \!+\! \frac{{{\kappa _{\left( {i,j} \right)}}}}
{2}({E_{ij}}{u_i} \!+\! {E_{ji}}{u_j})
\end{flalign}
Then, combining \eqref{Ppredphrasbarux} and \eqref{Ppredphrasbarwiji} yields the prediction phrase as presented in Algorithm \ref{alg:1}. On the other hand, we take a correction step and assign the obtained values to ${x_i}$, ${u_i}$ and $w_{\left( {i,j} \right),i}$, i.e., ${x_i} \leftarrow {{\bar x}_i}$, ${u_i} \leftarrow {{{\bar u}_i} + {\sigma _i}{A_i}\left( {{{\bar x}_i} - {x_i}} \right)}$ and ${w_{\left( {i,j} \right),i}} \leftarrow {{\bar w}_{\left( {i,j} \right),i}} + {\kappa _{(i,j)}}{E_{ij}}({{\bar u}_i}$$ - {u_i})$, which naturally are viewed as the iterative values at the clock $k+1$. As a result, the detailed pseudo-code of the distributed algorithm is exposed in Algorithm \ref{alg:1}.

\begin{algorithm}
	\renewcommand{\algorithmicrequire}{\textbf{Initialization:}}
	\renewcommand{\algorithmicensure}{\textbf{For} $k = 0,{\text{ }}1,{\text{ }} \cdots $
                                      \textbf{do:}\qquad\qquad\qquad\qquad\qquad\qquad\qquad}
	\caption{\textbf{:} DPDP\_GNE}
	\label{alg:1}
	\begin{algorithmic}[1]
		\REQUIRE Each player $i \in \mathcal{V}$ initializes $x_i^0 \in {\mathbb{R}^{{n_i}}}$, $u_i^0 \in{\mathbb{R}^{q}}$ and $ w_{\left( {i,j} \right),i}^0 \in {\mathbb{R}^{q}}$ for $j \in {\mathcal{N}_i}$. Meanwhile, it chooses positive step-sizes $\tau _i$, $\sigma_i$ and ${\omega _{(i,j)}}$ properly.
		\ENSURE
         Each player $i \in \mathcal{V}$ repeats, for $ j \in {\mathcal{N}_i}$,
		\STATE \textbf{Prediction phase:} \label{ste1} \\
               $\bar w_{\left( {i,j} \right),i}^k = \frac{1}{2}( {w_{\left( {i,j} \right),i}^k + w_{\left( {i,j} \right),j}^k} ) + \frac{{{\omega _{\left( {i,j} \right)}}}}{2}( {{E_{ij}}u_i^k + {E_{ji}}u_j^k} )$ \\

		       $\bar u_i^k = {\mathcal{P}_ {\mathbb{R}_ + ^q} }( {u_i^k + {\sigma _i}( {{A_i}x_i^k - {b_i} - \sum\nolimits_{j \in
                {\mathcal{N}_i}} {E_{ij}\bar w_{\left( {i,j} \right),i}^k} } )} )$ \\
        \STATE \textbf{Update phase:} \label{ste2} \\
                $x_i^{k + 1} = {\mathcal{P}_{{\Omega _i}}}( {x_i^k - {\tau _i}( {{\nabla _{{x_i}}}{f_i}( {x_i^k,x_{ - i}^k} ) + A_i^{\rm T}\bar u_i^k} )} )$ \\
                $u_i^{k + 1} = \bar u_i^k + {\sigma _i}{A_i}( {x_i^{k + 1} - x_i^k} )$ \\
		        $w_{( {i,j} ),i}^{k + 1} = \bar w_{( {i,j} ),i}^k + {\omega _{( {i,j} )}}{E_{ij}}( {\bar u_i^k - u_i^k})$
		\STATE \textbf{Transmission phase:} Player $i$ broadcasts ${{E_{ij}}u_i^{k + 1}}$ and $w_{\left( {i,j} \right),i}^{k + 1}$ to each neighbor $j \in {\mathcal{N}_i}$.
	\end{algorithmic}
\textbf{End.}
\end{algorithm}

Due to the edge-based variables, DPDP\_GNE can be carried out sequentially. At first, $\bar w_{\left( {i,j} \right),i}^k$ is computed to integrate the disagreement between $u_i^k$ and $u_j^k$, and in return it serves as the feedback signal and is further integrated into the update of $\bar u_i^k$ to reach consensus of local multipliers. After that, each player reads $ x_{-i}^k$ through interference graph and use $\bar u_i^k$ to accomplish the computation of $x_i^{k+1}$. Finally, $u_i^{k+1}$ and $ w_{\left( {i,j} \right),i}^{k+1}$ are calculated by using $x_i^{k+1}$ and $\bar u_i^k$. Therefore, DPDP\_GNE is called distributed due to the following features: i) it achieves full data decomposition because each player manages its own data; ii) compared with \cite[Algorithms 2 and 3]{Cenedese2019An}, it has no coordinator to collect edge-/node-based data to update.

\begin{remark} \label{ReAdvanEdge}
Note that our edge-based mode offers distinct advantages compared to related works: i) it allows neighboring players to independently keep edge-based data (i.e., $w_{(i,j),i}$ and $w_{(i,j),j}$), thus providing better support for privacy protection and local data management, unlike the edge-based rule \cite{Peng2020Asynchronous,Cenedese2019An} that still couples neighboring players and mandates them to access data commonly from shared edges; ii) it reduces the dependency on neighboring players' information, i.e., fewer neighbor details than \cite{Peng2020Asynchronous,Carnevale2022Carnevale}; iii) it frees DPDP\_GNE from the restrictive requirement of  (weighted) Laplacian matrices or weight balancing strategies, compared with \cite{Salehisadaghiani2019Seeking,Kaihong2019Distributed,Pavel2020Distributed,Franci2021Distributed,Peng2020Asynchronous,Cenedese2019An}. These advantages also happen in the following asynchronous algorithm.
\end{remark}

\subsection{Asynchronous Algorithm}

In the asynchronous setting, not all players participate in iterative updates. During the asynchronous implementation, we assume that each player has its own activation clock, which ticks according to an independent and identically distributed Poisson process. Having Poisson clocks is common in \cite{Mansoori2020Fast,Zhimin2016ARock}, which implies that the activation process of each player follows a Poisson process and is thus independently and identically distributed. In this case, we assume that only one clock ticks at the $k$-th iteration, and that the activated player is called as $i_k$. Therefore, $i_k$ holds the private variables ${x_{{i_k}}}$, ${u_{{i_k}}}$ and edge variables ${w_{({i_k},j),{i_k}}}$ for $j \in {\mathcal{N}_{{i_k}}}$, as well as the local parameters ${\sigma _{{i_k}}}$, ${\tau _{{i_k}}}$, ${\eta _{{i_k}}}$ and ${{\omega _{\left( {{i_k},j} \right)}}}$. Considering the lost or cluttered information caused by unreliable or unpredictable communication among players, the active player may possess a family of outdated information, i.e., ${x_{{i_k}}^{k - \phi _i^k}}$, ${u_{{i_k}}^{k - \phi _{{i_k}}^k}}$ and ${w_{\left( {{i_k},j} \right),{i_k}}^{k - \varphi _{\left( {{i_k},j} \right)}^k}}$, where ${\phi _{{i_k}}^k}$ and ${\varphi _{\left( {{i_k},j} \right)}^k}$ represent the delayed index. Let ${x_{ - {i_k}}^{k - \phi _{ - {i_k}}^k}}$ denote the delayed information consisting of all $x_j^{k - \phi _j^k}$, which influences the cost function ${f_{{i_k}}}( {{x_{{i_k}}},{x_{ - {i_k}}}}) $. To facilitate the asynchronous iterations, let each player maintain a local buffer, which stores involved outdated data. The delayed index needs to satisfy the following assumption.

\begin{assumption}\label{AssumpDelay}
At any iteration $k$, there exists a unified upper bound $\varepsilon >0$ such that $\phi _i^k,\varphi _{\left( {i,j} \right)}^k \leq \varepsilon $.
\end{assumption}

In view of the synchronous fashion in Algorithm \ref{alg:1}, the developed asynchronous algorithm consists of five phases: activation phase, receiving phase, prediction phase, update phase and transmission phase. More specifically, a player $i_k \in \mathcal{V}$ is picked arbitrarily at the instant $k$, and keeps receiving (possibly outdated) messages of ${w_{( {{i_k},j} ),j}}$, $E_{j{i_k}}{u_j}$ and ${x_{ - {i_k}}}$ from its neighbors via communication graph and interference graph. Then, it stores them into its local buffer and reads $x_{{i_k}}^{k - \phi _{{i_k}}^k}$, $w_{\left( {{i_k},j} \right),{i_k}}^{k - \varphi _{\left( {{i_k},j} \right)}^k}$ and all $E_{i_kj}{u_j^{k - \phi _j^k}}$ to compute
\begin{flalign}\label{Asywbariter}
\begin{gathered}
  \bar w_{\left( {{i_k},j} \right),{i_k}}^k = \frac{1}
{2}(w_{\left( {{i_k},j} \right),{i_k}}^{k - \varphi _{\left( {{i_k},j} \right)}^k} + w_{\left( {{i_k},j} \right),j}^{k - \varphi _{\left( {{i_k},j} \right)}^k}) \hfill \\
  \qquad\qquad\quad + \frac{{{\kappa _{\left( {{i_k},j} \right)}}}}
{2}({E_{{i_k}j}}u_{{i_k}}^{k - \phi _{{i_k}}^k} + {E_{j{i_k}}}u_j^{k - \phi _j^k}) \hfill \\
\end{gathered}
\end{flalign}
Subsequently, the active player further combines ${u_{{i_k}}^{k - \phi _{{i_k}}^k}}$, ${{A_{{i_k}}}x_{{i_k}}^{k - \phi _i^k} - {b_{{i_k}}}}$ and all obtained ${\bar w_{\left( {{i_k},j} \right),{i_k}}^k}$ to conduct
\begin{flalign}\label{Asyubariter}
\begin{gathered}
  \bar u_{{i_k}}^k = {\mathcal{P}_ {\mathbb{R}_ + ^q} }(\lambda _{{i_k}}^{k - \phi _{{i_k}}^k} + {\sigma _{{i_k}}}({A_{{i_k}}}x_{{i_k}}^{k - \phi _i^k} \hfill \\
  \qquad\qquad\qquad - {b_{{i_k}}} - \sum\limits_{j \in {\mathcal{N}_{{i_k}}}} {E_{{i_k}j}\bar w_{\left( {{i_k},j} \right),{i_k}}^k} )) \hfill \\
\end{gathered}
\end{flalign}
Next, collecting ${x_{ - {i_k}}^{k - \phi _{ - {i_k}}^k}}$ that determines ${f_{{i_k}}}\left( {{x_{{i_k}}},{x_{ - {i_k}}}} \right)$ and using $\bar u_{{i_k}}^k$, player $i_k \in \mathcal{V}$ finishes the prediction phase after carrying out the  following recursion:
\begin{flalign}\label{Asyxbariter}
\begin{gathered}
  \bar x_{{i_k}}^k \!=\! {\mathcal{P}_{{\Omega _{{i_k}}}}}(x_{{i_k}}^{k - \phi _{{i_k}}^k} \!-\! {\tau _{{i_k}}}({\nabla _{{x_{{i_k}}}}}{f_{{i_k}}}(x_{{i_k}}^{k - \phi _{{i_k}}^k},x_{ - {i_k}}^{k - \phi _{ - {i_k}}^k}) \hfill \\
  \qquad\qquad\quad + A_{{i_k}}^{\rm T}\bar u_{{i_k}}^k)) \hfill \\
\end{gathered}
\end{flalign}
By the obtained prediction phases \eqref{Asywbariter}-\eqref{Asyxbariter}, the active player further drives the next update as follows:
\begin{subequations}\label{Asyxkbariter}
  \begin{equation}\label{Asyxkbaritera}
   x_{{i_k}}^{k + 1} = x_{{i_k}}^k + {\eta _{{i_k}}}(\bar x_{{i_k}}^k - x_{{i_k}}^{k - \phi _{{i_k}}^k}) \qquad\qquad\qquad\,\,\,
  \end{equation}
  \begin{equation}\label{Asyxkbariterb}
  u_{{i_k}}^{k + 1} = u_{{i_k}}^k \!+\! {\eta _{{i_k}}}(\bar u_{{i_k}}^k \!-\! u_{{i_k}}^{k - \phi _{{i_k}}^k} \! + \! {\sigma _{{i_k}}}{A_{{i_k}}}(\bar x_{{i_k}}^k \!-\! x_{{i_k}}^{k - \phi _{{i_k}}^k}))
  \end{equation}
  \begin{equation}\label{Asyxkbariterc}
  \begin{gathered}
  w_{\left( {{i_k},j} \right),{i_k}}^{k + 1} = w_{\left( {{i_k},j} \right),{i_k}}^k + {\eta _{{i_k}}}(\bar w_{\left( {{i_k},j} \right),{i_k}}^k - w_{({i_k},j),{i_k}}^{k - \varphi _{({i_k},j)}^k} \hfill \\
  \qquad\qquad\quad + {\kappa _{({i_k},j)}}{E_{{i_k}j}}(\bar u_{{i_k}}^k - u_{{i_k}}^{k - \phi _{{i_k}}^k})) \hfill \\
\end{gathered} \qquad
  \end{equation}
\end{subequations}

After that, player $i_k$ writes the latest information to its buffer and also sends them to its neighbors. As a result, the detailed implementation of the asynchronous algorithm is summarized in Algorithm \ref{alg:2}.

\begin{algorithm}
	\renewcommand{\algorithmicrequire}{\textbf{Initialization:}}
	\renewcommand{\algorithmicensure}{\textbf{For} $k = 0,{\text{ }}1,{\text{ }} \cdots $
                                      \textbf{do:}\qquad\qquad\qquad\qquad\qquad\qquad\qquad}
	\caption{\textbf{:} ASY\_DPDP\_GNE}
	\label{alg:2}
	\begin{algorithmic}[1]
		\REQUIRE {Each player $i \in \mathcal{V}$ initializes $x_i^0 \in {\mathbb{R}^{{n_i}}}$, $u_i^0 \in{\mathbb{R}^{q}}$ and $ w_{\left( {i,j} \right),i}^0 \in {\mathbb{R}^{q}}$ for $j \in {\mathcal{N}_i}$. Meanwhile, it has a Poisson clock with rata ${\zeta _i}$, and chooses positive step-sizes $\tau _i$, $\sigma_i$ and ${\omega _{(i,j)}}$, and relaxed factor $\eta_i$ properly.
		\ENSURE
		\STATE \textbf{Activation phase:} At iteration $k$, a player $i_k \in \mathcal{V}$ is activated by its local clock with fixed probability $p_i $.

		\STATE \textbf{Receiving phase:}  The active player $i_k$ keeps receiving possibly outdated information ${w_{( {{i_k},j} ),j}^{k - \varphi _{( {{i_k},j})}^k}}$ and $E_{j{i_k}}{u_j^{k - \phi _j^k}}$ from local buffers of each neighbor $j \in {\mathcal{N}_{i_k}}$, as well as $x_j^{k - \phi _j^k}$ that depends on ${f_{{i_k}}}$ from the interference graph. Then, $i_k$ stores these data into its local buffer, and reads them to conduct local computations.

         \STATE \textbf{Prediction phase:} The active player $i_k$ computes $\bar w_{\left( {{i_k},j} \right),{i_k}}^k$, $\bar u_{{i_k}}^k$ and $\bar x_{{i_k}}^k$ as \eqref{Asywbariter}, \eqref{Asyubariter} and \eqref{Asyxbariter}, respectively.

         \STATE \textbf{Update phase:} The active player $i_k$ computes $x_{{i_k}}^{k + 1}$, $u_{{i_k}}^{k + 1}$ and $w_{({i_k},j),{i_k}}^{k + 1}$ as \eqref{Asyxkbaritera}, \eqref{Asyxkbariterb} and \eqref{Asyxkbariterc}, respectively.

         \STATE \textbf{Transmission phase:} The active player $i_k$ writes the latest information $x_{{i_k}}^{k + 1}$, $u_{{i_k}}^{k + 1}$ and $w_{({i_k},j),{i_k}}^{k + 1}$ to its local buffer, and broadcasts the (possibly delayed) data of $x_{i_k}, w_{({i_k},j),{i_k}},{\text{ }}E_{i_kj}u_{{i_k}}$ to its neighbors. Other inactive players only receive the information from neighbors and keep their variables unchanged at previous values.
}
	\end{algorithmic}
\textbf{End.}
\end{algorithm}

\begin{remark}
The receiving phase and the transmission phase compose the interaction, with which the active player enables to read all the stored data  in order to compute a new update. To be specific, the local buffer of each player stores neighboring data (i.e., $E_{ji}u_j$, $w_{(i,j),j}$ and $x_{-i}$) and its own data (i.e., $E_{ij}u_i$, $w_{(i,j),i}$ and $x_{i}$). The active player receives possibly delayed versions of $E_{ji}u_j$ and $w_{(i,j),j}$ via the communication graph assumed in Assumption \ref{AsCommu}, while receives the outdated version of $x_{-i}$ via the interference graph described in Section \ref{SubSynchr}, and copies all received data to its local buffer in practical. Additionally, delays account for the increase in time index during reading, computation, or transmission phases. Therefore, ASY\_DPDP\_GNE allows one player to compute local iteration by reading its local buffer, and complete the writing phase of its latest information before other players finish computations. In this way, players no longer need to wait for slowest neighbors before the execution of local iterations.
\end{remark}

\begin{remark}
It is inevitable to encounter the communication latency in many applications. In the Cournot market competition, for example, the communication network can be formed by viewing factories as the agents in networks. All factories are responsible for communicating their production decisions to ensure that the storage capacities of purchases are always maintained. In this case, the communication latency occurs once one factory makes its decisions slowly. On the other hand, the similar situation will happen in the demand response management, when users exchange information and schedule their energy consumption to satisfy the total demand. Benefiting from the introduced delayed models, the proposed asynchrony system is capable of addressing the above latency issue. To be specific, each factory or user receives outdated information from neighbors and stores them in its local buffer. Then it simply reads ${x_{{i}}^{k - \phi _i^k}}$, ${u_{{i}}^{k - \phi _{{i}}^k}}$, ${\{ {{E_{ji}}u_j^{k - \phi _j^k},w_{\left( {i,j} \right),j}^{k - \varphi _{\left( {i,j} \right)}^k}} \}_{j \in {\mathcal{N}_i}}}$, ${x_{ - {i}}^{k - \phi _{ - {i}}^k}}$ in its buffer to update its latest response.
\end{remark}

\section{Convergence Results} \label{SecConverg}

%

\subsection{Convergence of DPDP\_GNE}
 Let $U = {\rm{col}}\{ x,u,w\} $, ${\delta _\Omega }(x) = \sum\nolimits_{i = 1}^m {{\delta _{{\Omega _i}}}({x_i})}$, ${\delta _{\mathbb{R}_ + ^{mq}}}(u) = \sum\nolimits_{i = 1}^m {{\delta _{\mathbb{R}_ + ^q}}({u_i})} $ and $\delta _\mathcal{C}^ \star (w) = \sum\nolimits_{i = 1}^m {\sum\nolimits_{(i,j) \in \mathcal{E}} {\delta _{{\mathcal{C}_{(i,j)}}}^ \star ({w_{(i,j)}})} } $, where $\Omega  = \prod\nolimits_{i = 1}^m {{\Omega _i}}$ and $\mathcal{C} = \prod\nolimits_{(i,j) \in \mathcal{E}} {{\mathcal{C}_{(i,j)}}} $. Next, we define operators
\begin{flalign} \label{Monotone}
\begin{gathered}
  {T_A}:U \to (\partial {\delta _\Omega }(x),\partial {\delta _{{\mathbb{R}_ + }}}(u),\partial \delta _\mathcal{C}^ \star (w)) \hfill \\
  {T_M}:U \to ({A^{\text{T}}}u, - Ax + {\Pi^{\text{T}}}w, - \Pi) \hfill \\
  {T_C}:U \to (F(x),b,0) \hfill \\
\end{gathered}
\end{flalign}
where $A = {\text{blkdiag}}\{ {A_1}, \cdots ,{A_m}\}  \in {\mathbb{R}^{mq \times mq}}$, $b = {\text{col}}\{ {b_1}, \cdots ,{b_m}\}  \in {\mathbb{R}^{mq}}$, and $\Pi \in {\mathbb{R}^{2\left| \mathcal{E} \right|q \times 2\left| \mathcal{E} \right|q}}$ stacked by all ${\Pi_{(i,j)}}$. Recalling the solution ${U^ * } = {\text{col}}\{ {x^ * },{u^ * },{w^ * }\} $, the condition system \eqref{DisMNcondit} can be compactly expressed as ${U^ * } \in {\rm{zer(}}{T_A} + {T_M} + {T_C}{\rm{)}}$.

\begin{definition} \label{defStpsizeMart}
For each edge $\left( {i,j} \right) \in \mathcal{E}$, define the edge-weight matrix $W = {\rm blkdiag}\{ {{{\left( {{\kappa _{\left( {i,j} \right)}}I_{2q}} \right)}_{\left( {i,j} \right) \in \mathcal{E}}}} \} \in {\mathbb{R}^{2| \mathcal{E} |q \times 2| \mathcal{E} |q}}$. For the step-sizes ${\sigma _1}, \cdots ,{\sigma _m}$ and ${\tau _1}, \cdots ,{\tau _m}$, define the step-size matrices $\Gamma  = {\rm blkdiag}\left\{ {{\tau _i}I_{n_i}} \right\}_{i = 1}^m \in {\mathbb{R}^{n \times n}}$ and $\Sigma = {\rm blkdiag}\left\{ {{\sigma _i}I_q} \right\}_{i = 1}^m \in {\mathbb{R}^{mq \times mq}}$.
\end{definition}

Following the above definition, the compact expression of of Algorithm \ref{alg:1} takes the form of
\begin{flalign} \label{CompactPred}
\begin{gathered}
  {{\bar w}_k} = {\text{prox}}_{\delta _\mathcal{C}^ \star  }^W({w_k} + W{\Pi}{u_k}) \hfill \\
  {{\bar u}_k} = {\text{prox}}_{{\delta _{\mathbb{R}_ + ^{mq}}}}^\Sigma ({\lambda _k} + \Sigma (A{x_k} - b - {\Pi^{\rm T}}{{\bar w}_k})) \hfill \\
  {{\bar x}_k} = {\text{prox}}_{{\delta _\Omega }}^\Gamma ({x_k} - \Gamma (F({x_k}) + {A^{\rm T}}{{\bar u}_k})) \hfill \\
  {x_{k + 1}} = {{\bar x}_k} \hfill \\
  {u_{k + 1}} = {{\bar u}_k} + \Sigma A({{\bar x}_k} - {x_k}) \hfill \\
  {w_{k + 1}} = {{\bar w}_k} + W \Pi ({{\bar u}_k} - {u_k}) \hfill \\
\end{gathered}
\end{flalign}

\begin{definition} \label{AsumpOpertaot}
Let the matrices $\Gamma$, $\Sigma$, and $W$ in Definition \ref{defStpsizeMart} be positive definite. Then, define the matrix ${T_S} = {\rm{blkdiag}}\{ {\Gamma ^{ - 1}},{\Sigma ^{ - 1}},{W^{ - 1}}\}$, and the separable matrices
\begin{flalign*}
{T_H} = \underbrace {\left[ {\begin{array}{*{20}{c}}
   {{\Gamma ^{ - 1}}} & {{{{A^{\rm T}}} \mathord{\left/
 {\vphantom {{{A^{\rm T}}} 2}} \right.
 \kern-\nulldelimiterspace} 2}} & 0  \\
   {{A \mathord{\left/
 {\vphantom {A 2}} \right.
 \kern-\nulldelimiterspace} 2}} & {{\Sigma ^{ - 1}}} & {{{{\Pi^{\rm T}}} \mathord{\left/
 {\vphantom {{{\Pi^{\rm T}}} 2}} \right.
 \kern-\nulldelimiterspace} 2}}  \\
   0 & {{\Pi \mathord{\left/
 {\vphantom {\Pi 2}} \right.
 \kern-\nulldelimiterspace} 2}} & {{W^{ - 1}}}  \\
 \end{array} } \right]}_{{T_P}} + \underbrace {\frac{1}
{2}\left[ {\begin{array}{*{20}{c}}
   0 & {{A^{\rm T}}} & 0  \\
   { - A} & 0 & {{\Pi^{\rm T}}}  \\
   0 & { - \Pi} & 0  \\
 \end{array} } \right]}_{{T_\mathcal{K}}}
\end{flalign*}
\end{definition}

Given Definitions \ref{defStpsizeMart} and \ref{AsumpOpertaot}, the recursion \eqref{CompactPred} can be cast as
\begin{flalign} \label{OperatortildeTT}
U_{k+1} = TU_k
\end{flalign}
where the operator $T$ is given by
\begin{flalign} \label{OperatorTT}
TU = U + T_S^{ - 1}\left( {{T_H} - {T_M}} \right)\left( {\bar U - U} \right)
\end{flalign}
with $\bar U = {\left( {{T_H} + {T_A}} \right)^{ - 1}}\left( {{T_H} - {T_M} - {T_C}} \right)U$. Next, the relationship between ${\rm{fix(}}T{\text{)}}$ and ${\rm{zer(}}{T_A} + {T_M} + {T_C}{\text{)}}$ is discussed.

Suppose that the matrices $T_S$ and $T_P$ in Definition \ref{AsumpOpertaot} are positive definite. Note that the matrix $T_{\mathcal{K}}$ is maximal monotone because of its skew-symmetry \cite{Bauschke2011Convex}. Considering $\hat U \in {\text{fix(}}T{\text{)}}$, one can derive that $\hat U \in {\text{fix(}}T{\text{)}} \Leftrightarrow T\hat U = \hat U \Leftrightarrow \bar U = \hat U \Leftrightarrow {( {{T_H} + {T_A}} )^{ - 1}}( {{T_H} - {T_M} - {T_C}} ) \hat U = \hat U \Leftrightarrow {T_H} \hat U - {T_M}\hat U - {T_C}\hat U \in {T_H}\hat U + {T_A}\hat U \Leftrightarrow \hat U \in {\text{zer(}}{T_A} + {T_M} + {T_C}{\text{)}}$, where the second equivalence follows from the monotonicity of $T_{\mathcal{K}}$ and the positive definiteness of $T_S$ and $T_P$. Therefore, it concludes that ${\text{fix(}}T{\text{)}} = {\text{zer(}}{T_A} + {T_M} + {T_C}{\text{)}}$.
\begin{remark}
Inspired by the splitting scheme mentioned in \cite{Latafat2017Asymmetric}, the compact operator $T$ is split into several simplified sub-operators that are endowed with the properties of monotonicity (e.g., $T_A$, $T_M$, $ {T_\mathcal{K}}$) or positive definiteness (e.g., $T_P$, $T_S$), which favorably facilitates the later analysis.
\end{remark}

\begin{assumption}\label{SynStepsizes} Consider a set of local constants such that $0 < {\alpha _1}, \cdots, {\alpha _m} < 1$ and let ${\kappa _{\left( {i,j} \right)}} >0$. Each player $i \in \mathcal{V}$ holds ${\sigma _i}>0$ such that ${\sigma _i} < 9{(1 - {\alpha _i})^2}/(16\sum\nolimits_{j \in {\mathcal{N}_i}} {{\kappa _{\left( {i,j} \right)}}} )$, and ${\tau _i}>0$ such that
\begin{flalign*}
{\tau _i} < \frac{{2{\mu}{{( { 1 - {\alpha _i}} )}^2}}}{{( {1 - \alpha _i} ){{L_F^2} } + 8{\mu}{\sigma _i}( {1 + {\sigma _i}\sum\nolimits_{j \in {{\mathcal N}_i}} {{\kappa _{\left( {i,j} \right)}}} } ){\lambda _{\max }}(A_i^{\rm T}{A_i})}}
\end{flalign*}
\end{assumption}

\begin{lemma}\label{PosiStepsizes}
Let Assumptions \ref{AssumSetconvex}, \ref{AssumpseudoLipfucnti}, \ref{AsCommu} and \ref{SynStepsizes} hold. Define the matrix $\Theta  = {\rm{blkdiag}}\{ {\Theta _1},{\Theta _2},{\Theta _3}\}$, where ${\Theta _1} = {\rm{blkdiag}}\{ {\alpha _i}{I_{{n_i}}}\} _{i = 1}^m \in {\mathbb{R}^{n \times n}}$, ${\Theta _2} = {\rm{blkdiag}}\{ {\alpha _i}{I_q}\} _{i = 1}^m \in {\mathbb{R}^{mq \times mq}}$ and ${\Theta _3} = {\rm{blkdiag}}\{ {\{ {\alpha _i}{I_q},{\alpha _j}{I_q}\} _{(i,j) \in \mathcal{E}}}\}  \in {\mathbb{R}^{2\left| \mathcal{E} \right|q \times 2\left| \mathcal{E} \right|q}}$. Moreover, consider the symmetric matrix
\begin{flalign*}
{T_{\tilde P}} = \left[ {\begin{array}{*{20}{c}}
   {2{\Gamma ^{ - 1}} - \frac{{L_F^2} }{2\mu}I} & { - {A^{\rm T}}} & {{A^{\rm T}}\Sigma {\Pi^{\rm T}}}  \\
   { - A} & {2{\Sigma ^{ - 1}}} & { - {\Pi^{\rm T}}}  \\
   {\Pi\Sigma A} & { - \Pi} & {2{W^{ - 1}}}  \\
 \end{array} } \right]
\end{flalign*}
Then, ${T_{\tilde P}} - \left( {\Theta  + I} \right){T_S} $ is positive definite.
\end{lemma}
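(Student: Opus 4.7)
The plan is to show that $\langle U,(T_{\tilde P}-(\Theta+I)T_S)U\rangle>0$ for every nonzero $U={\rm col}\{x,u,w\}$, which is equivalent to the claimed positive definiteness. Because $T_S$ is block-diagonal, subtracting $(\Theta+I)T_S$ only modifies the diagonal blocks of $T_{\tilde P}$. Under Definitions~\ref{defStpsizeMart} and~\ref{AsumpOpertaot}, the diagonal contributions reduce to the per-player scalar terms $(1-\alpha_i)/\tau_i-L_F^2/(2\mu)$ on $x_i$, $(1-\alpha_i)/\sigma_i$ on $u_i$, and the per-edge scalar terms $(1-\alpha_k)/\kappa_{(i,j)}$ on $w_{(i,j),k}$ for $k\in\{i,j\}$, all of which are positive under Assumption~\ref{SynStepsizes}. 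The off-diagonal blocks produce the indefinite cross terms $-2\langle Ax,u\rangle+2\langle Ax,\Sigma\Pi^{\rm T}w\rangle-2\langle u,\Pi^{\rm T}w\rangle$; the strategy is to absorb these into the positive diagonal contributions via carefully chosen weighted Young inequalities.

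To carry out the absorption, I will expand each cross term edge-by-edge and player-by-player and apply $-2\langle a,b\rangle\geq -c\|a\|^2-\|b\|^2/c$ with weights dictated by the block structure of $\Sigma$ and $W$. For $-2\langle A_ix_i,u_i\rangle$, I would introduce a per-player parameter $\beta_1^i$ with weight $\sigma_i$ on the $x$-side; for $2\langle E_{ij}\sigma_iA_ix_i,w_{(i,j),i}\rangle$, a per-player parameter $\delta_i$ with weight $\kappa_{(i,j)}$ on the $w$-side; and for $-2\langle E_{ij}u_i,w_{(i,j),i}\rangle$, a per-player parameter $\gamma_i$ with weight $\kappa_{(i,j)}$ on the $w$-side. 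After bounding $\|A_ix_i\|^2\leq\lambda_{\max}(A_i^{\rm T}A_i)\|x_i\|^2$ and summing the pairwise inequalities over incident edges $j\in\mathcal{N}_i$, strict positive definiteness reduces to the per-player scalar conditions
\begin{equation*}
\frac{1-\alpha_i}{\tau_i}-\frac{L_F^2}{2\mu}>\sigma_i\lambda_{\max}(A_i^{\rm T}A_i)\Bigl[\beta_1^i+\delta_i\sigma_i\!\!\sum_{j\in\mathcal{N}_i}\!\!\kappa_{(i,j)}\Bigr],
\end{equation*}
together with $1-\alpha_i>1/\beta_1^i+\gamma_i\sigma_i\sum_{j\in\mathcal{N}_i}\kappa_{(i,j)}$ and $1-\alpha_i>1/\delta_i+1/\gamma_i$.

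The critical step will be to set $\beta_1^i=\delta_i=4/(1-\alpha_i)$: this choice collapses the first condition exactly to the $\tau_i$-bound in Assumption~\ref{SynStepsizes}, since $\beta_1^i+\delta_i\sigma_i\sum_{j\in\mathcal{N}_i}\kappa_{(i,j)}=4(1+\sigma_i\sum_{j\in\mathcal{N}_i}\kappa_{(i,j)})/(1-\alpha_i)$. Substituting the same choice into the remaining two inequalities reduces them to the joint requirement $\gamma_i\in\bigl(\tfrac{4}{3(1-\alpha_i)},\,\tfrac{3(1-\alpha_i)}{4\sigma_i\sum_{j\in\mathcal{N}_i}\kappa_{(i,j)}}\bigr)$, and an elementary computation shows this interval is nonempty if and only if $\sigma_i\sum_{j\in\mathcal{N}_i}\kappa_{(i,j)}<9(1-\alpha_i)^2/16$, which is precisely the $\sigma_i$-bound in Assumption~\ref{SynStepsizes}. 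The hard part will be identifying this precise Young-parameter assignment: the factors $4/(1-\alpha_i)$ and the threshold $9/16$ are not obvious a priori and are forced by requiring the three absorption conditions to simultaneously match the compact per-player bounds of Assumption~\ref{SynStepsizes}; in particular, the equality $\beta_1^i=\delta_i$ is what factors the $x$-condition into the product form $1+\sigma_i\sum_{j\in\mathcal{N}_i}\kappa_{(i,j)}$. With $\gamma_i$ chosen strictly inside the derived interval for every $i\in\mathcal{V}$, all three strict inequalities hold, yielding the positive definiteness of $T_{\tilde P}-(\Theta+I)T_S$.
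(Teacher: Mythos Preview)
Your proposal is correct and follows essentially the same approach as the paper: both expand the quadratic form, apply weighted Young inequalities to the three cross terms with per-player weights built from $\sigma_i$, $\kappa_{(i,j)}$, and $1-\alpha_i$, and reduce positive definiteness to the scalar per-player bounds of Assumption~\ref{SynStepsizes}. The only minor difference is that the paper fixes all Young coefficients up front (namely $1/4$, $4$, $3/4$, $4/3$, corresponding to your $\beta_1^i=\delta_i=4/(1-\alpha_i)$ and $\gamma_i=4/(3(1-\alpha_i))$, the left endpoint of your interval), whereas you leave $\gamma_i$ free inside the open interval---a slightly cleaner choice since it keeps the $w$-block strictly positive rather than exactly zero.
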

\begin{proof}
See Appendix \ref{ProofPosiStepsizes}.
\end{proof}

Based on above analysis, the following lemma reports an important result of the operator $ T$.
\begin{lemma}\label{LemOperTT}
Let Assumptions \ref{AssumSetconvex}, \ref{AssumpseudoLipfucnti}, \ref{AsCommu} and \ref{SynStepsizes} hold. Recall the operator $ T$ in \eqref{OperatortildeTT}. Consider the constant $\gamma  = {( {1 + {\alpha _{\min }}} )^{ - 1}}$ with ${\alpha _{\min }} = \min \{ {{\alpha _1},{\rm{ }}{\alpha _2},{\rm{ }} \cdots ,{\rm{ }}{\alpha _m}} \}$. Then, $ T$ is $\gamma$-averaged, i.e., for any $U$ and $Z$, it holds that
\begin{flalign} \label{tildeTAverage}
\begin{array}{l}
 \quad \| { TU -  TZ} \|_{{T_S}}^2 \\
  \le \| {U \!-\! Z} \|_{{T_S}}^2 \!-\! \frac{{1 - \gamma }}{\gamma }\| {( {{\rm{Id}} \!-\!  T} )U \!-\! ( {{\rm{Id}} \!-\!  T} )Z} \|_{{T_S}}^2 \\
 \end{array}
\end{flalign}
\end{lemma}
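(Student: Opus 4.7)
The plan is to verify \eqref{tildeTAverage} by recasting $\gamma$-averagedness as the equivalent inner-product inequality
\begin{equation*}
\langle TU - TZ - (U-Z),\, U - Z\rangle_{T_S} \;\le\; -\tfrac{1+\alpha_{\min}}{2}\,\|TU - TZ - (U-Z)\|_{T_S}^2 ,
\end{equation*}
and then reading this off from the resolvent that defines $\bar U$ together with the positive-definiteness furnished by Lemma \ref{PosiStepsizes}.

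First I would unpack the definition. The relation $\bar U = (T_H + T_A)^{-1}(T_H - T_M - T_C)U$ is equivalent to the inclusion
$T_H(U - \bar U) - T_M U - T_C U \in T_A \bar U$, and similarly for $Z,\bar Z$. Since $T_A$ is the subdifferential of a proper, closed, convex function it is maximally monotone, so pairing the difference of the two inclusions with $\bar U - \bar Z$ yields
\begin{equation*}
\langle T_H(U-\bar U) - T_H(Z-\bar Z) - T_M(U-Z) - (T_C U - T_C Z),\; \bar U - \bar Z\rangle \;\ge\; 0.
\end{equation*}
Next I would split $\bar U - \bar Z = (U - Z) + (\bar U - U - (\bar Z - Z))$ inside the $T_C$-inner product. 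Since $T_C U - T_C Z = (F(x)-F(z),0,0)$, Assumption \ref{AssumpseudoLipfucnti} gives the strong-monotonicity bound $\mu\|x-z\|^2$ on the first piece, while a Young-type estimate using the $L_F$-Lipschitz constant absorbs the second piece; the correction $-\tfrac{L_F^2}{2\mu}I$ sitting in the $(1,1)$-block of $T_{\tilde P}$ is the fingerprint of exactly this step.

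Then I would invoke the decomposition $T_H = T_P + T_{\mathcal{K}}$ (read directly off Definition \ref{AsumpOpertaot}) and the algebraic identity $T_H - T_M = T_P - T_{\mathcal{K}}$, so that the $T$-increment reads
\begin{equation*}
TU - TZ - (U - Z) \;=\; T_S^{-1}(T_P - T_{\mathcal{K}})\bigl(\bar U - U - (\bar Z - Z)\bigr).
\end{equation*}
The skew-symmetry of $T_M$ and $T_{\mathcal{K}}$ kills all of their diagonal contributions, leaving only the cross-terms $A$, $A^{\mathrm T}\Sigma \Pi^{\mathrm T}$, $\Pi\Sigma A$ that appear precisely in $T_{\tilde P}$. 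Plugging the $T$-increment into the left-hand side, combining with the monotonicity inequality and the $T_C$-bound from the previous step, and collecting all quadratic forms in $v := \bar U - U - (\bar Z - Z)$, I expect the bound to consolidate into
\begin{equation*}
\langle TU - TZ - (U-Z),\; U - Z\rangle_{T_S} \;\le\; -\tfrac{1}{2}\,\|v\|_{T_{\tilde P}}^2 .
\end{equation*}
Finally I apply Lemma \ref{PosiStepsizes}: $T_{\tilde P} \succeq (\Theta + I)T_S \succeq (1+\alpha_{\min})T_S$, so $\|v\|_{T_{\tilde P}}^2 \ge (1+\alpha_{\min})\|v\|_{T_S}^2$. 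Using $\|TU-TZ-(U-Z)\|_{T_S}^2 = \|T_S^{-1}(T_P - T_{\mathcal{K}})v\|_{T_S}^2$ and the same spectral bound (again exploiting skew-symmetry of $T_{\mathcal{K}}$) to compare it to $\|v\|_{T_S}^2$, the target inequality with $\gamma = 1/(1+\alpha_{\min})$ drops out.

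The main obstacle is the bookkeeping in the middle step: every off-diagonal block of $T_{\tilde P}$, in particular the triple product $A^{\mathrm T}\Sigma \Pi^{\mathrm T}$, must emerge from the right mixed terms after expanding $T_H(U-\bar U)-T_M U$ componentwise in $(x,u,w)$, and the Young constants must be tuned so that the $\mu$-versus-$L_F^2$ trade-off lands precisely at $L_F^2/(2\mu)$ in the $(1,1)$-block rather than an inequivalent constant. A secondary difficulty is the final norm-equivalence step, where $\|TU-TZ-(U-Z)\|_{T_S}^2$ must be bounded above by $\|v\|_{T_S}^2$ through the non-symmetric operator $T_P - T_{\mathcal{K}}$; this is what ultimately forces the step-size restrictions in Assumption \ref{SynStepsizes}, through the positive-definiteness condition of Lemma \ref{PosiStepsizes}.
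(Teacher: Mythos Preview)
Your overall architecture---monotonicity of $T_A$ from the resolvent inclusion, skew-symmetry of $T_M$, a Young/cocoercivity estimate for $T_C$, and then Lemma \ref{PosiStepsizes}---matches the paper. The gap is in where $T_{\tilde P}$ lives and how Lemma \ref{PosiStepsizes} is invoked.

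The bound you actually obtain from the monotonicity step is
\[
\langle TU-TZ-(U-Z),\,U-Z\rangle_{T_S}\;\le\;-\,\bigl\|v\bigr\|_{\,T_P-\mathrm{diag}(L_F^2/(4\mu)I,0,0)}^{2},
\qquad v=(\bar U-U)-(\bar Z-Z),
\]
i.e.\ the quadratic form in $v$ has the off-diagonals of $T_P$ (namely $+A/2,\,+\Pi/2$), not those of $T_{\tilde P}$ (namely $-A/2,\,-\Pi/2,\,A^{\mathrm T}\Sigma\Pi^{\mathrm T}/2$). The triple product $A^{\mathrm T}\Sigma\Pi^{\mathrm T}$ does \emph{not} emerge from expanding $T_H(U-\bar U)-T_M U$ in $v$; it only appears after the change of variables $w:=(TU-U)-(TZ-Z)=T_S^{-1}(T_H-T_M)v$. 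Writing $N:=T_S^{-1}(T_H-T_M)$ (lower block-triangular with identity diagonal), a direct computation gives the congruence
\[
N^{-\mathrm T}\Bigl(T_P-\mathrm{diag}\bigl(\tfrac{L_F^2}{4\mu}I,0,0\bigr)\Bigr)N^{-1}\;=\;\tfrac12\,T_{\tilde P},
\]
so that $\|v\|_{T_P-\mathrm{diag}}^2=\tfrac12\|w\|_{T_{\tilde P}}^2$. This is the step hidden behind the paper's phrase ``the last term is derived from \eqref{CompaTildeT}''. With the bound now in $w$-coordinates one gets
\[
\|TU-TZ\|_{T_S}^2-\|U-Z\|_{T_S}^2\;=\;2\langle w,U-Z\rangle_{T_S}+\|w\|_{T_S}^2\;\le\;-\|w\|_{T_{\tilde P}-T_S}^2,
\]
and Lemma \ref{PosiStepsizes} applies \emph{directly in $w$}: $T_{\tilde P}-(\Theta+I)T_S\succ 0$ gives $-\|w\|_{T_{\tilde P}-T_S}^2\le-\|w\|_{\Theta T_S}^2\le-\alpha_{\min}\|w\|_{T_S}^2$, which is \eqref{tildeTAverage}.

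Your final ``norm-equivalence'' step, bounding $\|w\|_{T_S}^2=\|Nv\|_{T_S}^2$ above by $\|v\|_{T_S}^2$, cannot be rescued: one computes $N^{\mathrm T}T_SN-T_S=\mathrm{blkdiag}(A^{\mathrm T}\Sigma A,\,\Pi^{\mathrm T}W\Pi,\,0)+\text{(off-diag.)}$, whose block-diagonal part is positive, so the desired inequality is false in general. The fix is not a sharper spectral bound on $T_P-T_{\mathcal K}$ but the congruence above, which makes that comparison unnecessary.
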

\begin{proof}
See Appendix \ref{ProofLemOperTT}.
\end{proof}

Lemma \ref{LemOperTT} indicates that $T$ is $\gamma$-averaged \cite[Proposition 4.35]{Bauschke2011Convex}. Then, the convergence of DPDP\_GNE follows directly from Theorem \ref{Convergence}.

\begin{theorem}\label{Convergence}
Let Assumptions \ref{AssumSetconvex}, \ref{AssumpseudoLipfucnti}, \ref{AsCommu} and \ref{SynStepsizes} hold. Consider ${U^*} = {\rm{col}}\{ {x^*},{u^*},{w^*}\}  \in {\rm{fix}}{\text{(}}T{\text{)}}$. Then, the sequence ${\left\{ {{U_k}} \right\}_{k \in \mathbb{N}}}$, generated from \eqref{OperatortildeTT}, is quasi-Fej{\'{e}}r monotone, and converges to ${U^ * }$. Moreover, $x^*$ is a solution of $\rm{GNG}$($\Xi$).
\end{theorem}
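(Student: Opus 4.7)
The plan is to leverage Lemma \ref{LemOperTT}, which has already done the heavy lifting of establishing that the operator $T$ defining the iteration $U_{k+1}=TU_k$ is $\gamma$-averaged in the $T_S$-weighted norm. Once this is in hand, the convergence argument is a standard Krasnosel'ski\u{\i}--Mann/Opial-type routine, and the GNE interpretation follows by chaining the earlier lemmas.

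First, I would fix an arbitrary $U^{*}\in{\rm fix}(T)$ (nonempty because ${\rm fix}(T)={\rm zer}(T_A+T_M+T_C)$ and $U^{*}$ is assumed to exist by the KKT/VI discussion, using the strong monotonicity in Assumption \ref{AssumpseudoLipfucnti}). Specializing \eqref{tildeTAverage} with $U=U_k$ and $Z=U^{*}$, together with $TU^{*}=U^{*}$ and $TU_k=U_{k+1}$, yields immediately
\begin{flalign*}
\|U_{k+1}-U^{*}\|_{T_S}^{2}\le \|U_k-U^{*}\|_{T_S}^{2}-\tfrac{1-\gamma}{\gamma}\|U_k-U_{k+1}\|_{T_S}^{2}.
\end{flalign*}
Since $T_S$ is positive definite by Definition \ref{AsumpOpertaot} and $\gamma\in(0,1)$, this is exactly quasi-Fej\'{e}r monotonicity of $\{U_k\}$ with respect to ${\rm fix}(T)$ in the $T_S$-norm, proving the first claim.

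Next, I telescope the inequality from $k=0$ to $K$ to obtain $\sum_{k=0}^{\infty}\|U_k-U_{k+1}\|_{T_S}^{2}<\infty$, which forces $U_{k+1}-U_k\to 0$, i.e., the fixed-point residual $({\rm Id}-T)U_k\to 0$. The same inequality also shows that $\{U_k\}$ is bounded in the $T_S$-norm, hence bounded in the Euclidean norm. Extracting a convergent subsequence $U_{k_\ell}\to \hat U$ and using the fact that $T$ is continuous (inherited from $T_P$ being invertible and the proximal/projection operators being nonexpansive), the relation $U_{k_\ell}-TU_{k_\ell}\to 0$ passes to the limit and gives $\hat U=T\hat U\in{\rm fix}(T)$. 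The standard Opial argument---applicable because the whole sequence is quasi-Fej\'{e}r monotone with respect to ${\rm fix}(T)$ and every cluster point lies in ${\rm fix}(T)$---then upgrades subsequential convergence to convergence of the full sequence to some $U^{*}\in{\rm fix}(T)$. (In finite dimension, boundedness plus a unique cluster point in ${\rm fix}(T)$ suffices; uniqueness of the cluster point follows because $\|U_k-\hat U\|_{T_S}$ is nonincreasing for every cluster point $\hat U$.)

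Finally, I would identify the limit with a GNE. Since ${\rm fix}(T)={\rm zer}(T_A+T_M+T_C)$, the limit $U^{*}={\rm col}\{x^{*},u^{*},w^{*}\}$ satisfies the compact inclusion $0\in(T_A+T_M+T_C)U^{*}$, which is precisely the componentwise system \eqref{DisMNcondit}. Lemma \ref{DistriMN} then implies $u_1^{*}=\cdots=u_m^{*}$ and $x^{*}\in{\rm SOL}(\mathcal{X},F)$, and Lemma \ref{SoluVItoGNE} concludes $x^{*}\in{\rm GNE}(\Xi)$. The only step where care is required is verifying that the Opial/subsequence argument really closes in this weighted finite-dimensional setting; the potential obstacle is showing that the cluster point indeed lies in ${\rm fix}(T)$, but this reduces to the continuity of $T$, which is inherited from continuity of the proximal map of $\delta_{\Omega}$, the projection onto $\mathbb{R}_+^q$, and the bounded linear maps appearing in $T_H$ and $T_M$, so no additional assumption is needed beyond what Lemma \ref{LemOperTT} already uses.
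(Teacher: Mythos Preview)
Your proposal is correct and follows essentially the same approach as the paper: both specialize Lemma \ref{LemOperTT} with $U=U_k$, $Z=U^*$ to obtain the quasi-Fej\'er inequality, deduce convergence of $\{U_k\}$ to a point of ${\rm fix}(T)={\rm zer}(T_A+T_M+T_C)$, and then invoke Lemma \ref{DistriMN} (together with Lemma \ref{SoluVItoGNE}) to identify $x^*$ as a GNE. The only cosmetic difference is that the paper cites \cite[Theorem 5.33]{Bauschke2011Convex} for the convergence step, whereas you spell out the subsequence/Opial argument explicitly.
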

\begin{proof}
Substituting $U$ and $Z$ with $U_k$ and $U^*$ in \eqref{tildeTAverage} obtains that $\| {{U_{k + 1}} - {U^ * }} \|_{{T_S}}^2 \leq \| {{U_k} - {U^ * }} \|_{{T_S}}^2 - \frac{{1 - \gamma }}{\gamma }\| {{U_k} - {U_{k + 1}}} \|_{{T_S}}^2$. Hence, ${\left\{ {{U_k}} \right\}_{k \in \mathbb{N}}}$ is quasi-Fej{\'{e}}r monotone \cite[Definition 5.32]{Bauschke2011Convex}, and its convergence holds from \cite[Theorem 5.33]{Bauschke2011Convex}. Therefore, $U_k$ converging to the fixed point $ {U^ * }$ of $ T$ means that ${x_k} \to {x^ * }$ and ${u_k} \to {u^ * }$ as $k \to  + \infty $. ${\text{fix(}}T{\text{)}} = {\text{zer(}}{T_A} + {T_M} + {T_C}{\text{)}}$ yields that the pair $({x^*},{u^*})$ meets the condition \eqref{DisMNcondit}. Therefore, we directly apply the result of Lemma \ref{DistriMN} to obtain that ${x^ * } \in {\rm{GNE}}$($\Xi$).
\end{proof}

Theorem \ref{Convergence} shows that $||{U_k} - {U^ * }||_{{T_S}}^2$ is bounded and contractive, so the sequence ${\{ ||{U_k} - {U_{k + 1}}||_{{T_S}}^2\} _{k \in \mathbb{N}}}$ is monotonically nonincreasing \cite{Yan2018A}. Therefore, it follows from \cite[Theorem 1]{Yan2018A} that $||{U_k} - {U_{k + 1}}||_{{T_S}}^2 \le \frac{\gamma }{{1 - \gamma }}\frac{{||{U_0} - {U^ * }||_{{T_S}}^2}}{{k + 1}}$ and ${\{ ||{U_k} - {U_{k + 1}}||_{{T_S}}^2\} _{k \in \mathbb{N}}}$ has the $o(1/(k + 1))$ convergence rate. This is a slightly faster result than the $O(1/(k + 1))$ convergence rate in \cite{Yi2019splitting,Pavel2020Distributed,Peng2020Asynchronous}. Other theoretical insight is left to Subsection \ref{DiscTheore}.

\subsection{Convergence of ASY\_DPDP\_GNE}
In this subsection, the convergence of ASY\_DPDP\_GNE is studied under Assumptions \ref{AssumSetconvex}-\ref{SynStepsizes}, through exploiting the $\gamma$-averaged operator $T$. Define the operator $
{T_i}:U \to {\text{col}}\{ 0, \cdots ,{(TU)_i}, \cdots ,0\}$, where $(TU)_i$ denotes the components of the $i$ player, i.e., ${x_i}$, ${u_i}$ and ${w_{\left( {i,j} \right),i}}$ for all $j \in {\mathcal{N}_i}$. Set the operators $R = {\text{Id}} - T,{\text{ }}{R_i} = ({\text{Id}} - T)_i$, which can be easily checked that $RU = \sum\nolimits_{i = 1}^m {{R_i}U}$. Due to the delays, let ${\phi _k}$ and ${\varphi _k}$ be the delay index vectors, and define the delayed variables vectors at $k$ iteration as: ${x_{k - {\phi _k}}} = {\text{col}}\{ x_1^{k - \phi _1^k}, \cdots ,x_m^{k - \phi _m^k}\}$, ${u_{k - {\phi _k}}} = {\text{col}}\{ u_1^{k - \phi _1^k}, \cdots ,u_m^{k - \phi _m^k}\}$ and ${w_{k - {\varphi _k}}} = {\text{col}}{\{ w_{\left( {i,j} \right)}^{k - \varphi _{\left( {i,j} \right)}^k}\} _{\left( {i,j} \right) \in \mathcal{E}}}$, where $w_{\left( {i,j} \right)}^{k - \varphi _{\left( {i,j} \right)}^k} = {\text{col}}\{ w_{\left( {i,j} \right),i}^{k - \varphi _{\left( {i,j} \right)}^k},w_{\left( {i,j} \right),j}^{k - \varphi _{\left( {i,j} \right)}^k}\}$. Furthermore, consider the vector ${{\overset{\lower0.5em\hbox{$\smash{\scriptscriptstyle\frown}$}}{U} }_k} = {\text{col}}\{ {x_{k - {\phi _k}}},{u_{k - {\phi _k}}},{w_{k - {\varphi _k}}}\}$. Since one player $i_k$ is activated at each iteration, the delayed components corresponding to $i_k$ (i.e., $x_{{i_k}}^{k - \phi _{{i_k}}^k}$, $u_{{i_k}}^{k - \phi _{{i_k}}^k}$ and $w_{\left( {{i_k},j} \right),{i_k}}^{k - \varphi _{\left( {{i_k},j} \right)}^k}$ for $j \in {\mathcal{N}_{i_k}}$) take part in the next iteration. Therefore, following the compact form of the synchronous setting \eqref{OperatortildeTT} and $T_i$, we equivalently formulate the asynchronous distributed iterations as
\begin{flalign} \label{asyniter}
\begin{gathered}
  {U_{k + 1}} = {U_k} - {\eta _{{i_k}}}{R_{{i_k}}}{{\overset{\lower0.5em\hbox{$\smash{\scriptscriptstyle\frown}$}}{U} }_k} \hfill \\
\end{gathered}
\end{flalign}
Next, let ${\mathcal{Z}_k} = \hat \sigma ({U_0},{U_1},{{\overset{\lower0.5em\hbox{$\smash{\scriptscriptstyle\frown}$}}{U} }_1}, \cdots ,{U_k},{{\overset{\lower0.5em\hbox{$\smash{\scriptscriptstyle\frown}$}}{U} }_k})$ denote the smallest $\hat \sigma$-algebra generated by ${U_0},{U_1},{{\overset{\lower0.5em\hbox{$\smash{\scriptscriptstyle\frown}$}}{U} }_1}, \cdots ,{U_k},{{\overset{\lower0.5em\hbox{$\smash{\scriptscriptstyle\frown}$}}{U} }_k}$.

\begin{assumption}\label{AssumpProbibility}
For any $k > 0$, let $i_k$ be the index of the player that is responsible for the $k$-th completed update. Assume that $i_k$ is a random variable that is independent of ${i_1}, \cdots ,{i_{k - 1}}$ as ${\rm{Prob}}( {{i_k} = i})  = {p_i}$.
\end{assumption}

\begin{remark} \label{DiscuProb}
Since each player $i$ follows a Poisson process with parameter ${\zeta _i}$ and only one agent is active at each iteration of Algorithm \ref{alg:2}, any computation occurring at agent $i$ is instant. This case results in a feasible probability selection, i.e., ${p_i} = \frac{{{\zeta _i}}}{{\sum\nolimits_{i = 1}^m {{\zeta _i}} }}$, implying $\sum\nolimits_{i = 1}^m {{p_i}}  = 1$. Moreover, the minimum component probability can be given by $p_{\min} = \frac{\min \{ {\zeta _1}, \cdots, {\zeta _m}\} }{{\sum\nolimits_{i = 1}^m {{\zeta _i}} }}$.
\end{remark}

For ease of analysis, we set a useful synchronous update
\begin{flalign} \label{Fuzhuite}
{{\tilde U}_{k + 1}} = {U_k} - \eta {R} {{\overset{\lower0.5em\hbox{$\smash{\scriptscriptstyle\frown}$}}{U} }_k}
\end{flalign}
Under Assumptions \ref{AssumSetconvex}-\ref{AssumpProbibility}, the relation between ${U_k}$ and ${{\overset{\lower0.5em\hbox{$\smash{\scriptscriptstyle\frown}$}}{U} }_k}$ is established in the following lemma, which appears in \cite{Zhimin2016ARock}.
\begin{lemma} \label{OIYRE}
Let Assumptions \ref{AssumSetconvex}-\ref{AssumpProbibility} hold. For the index set $J\left( k \right) \subseteq \left\{ {k - 1, \ldots ,k - \varepsilon } \right\}$, it holds that ${{\overset{\lower0.5em\hbox{$\smash{\scriptscriptstyle\frown}$}}{U} }_k} = {U_k} + \sum\nolimits_{d \in J\left( k \right)} {\left( {{U_d} - {U_{d + 1}}} \right)}$.
\end{lemma}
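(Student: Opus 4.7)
The plan is to establish the identity via a blockwise telescoping argument that exploits the single-activation sparsity of the asynchronous iteration~\eqref{asyniter}.

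First I would record the key sparsity fact: at iteration $d$ only player $i_d$ acts, so $U_d - U_{d+1}$ is supported on the block associated with $i_d$ (the coordinates $x_{i_d}$, $u_{i_d}$, and $\{w_{(i_d,j),i_d}\}_{j\in\mathcal{N}_{i_d}}$). Consequently every scalar coordinate of $U$ is piecewise constant in $d$, changing value only at iterations that activate its owner.

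Next, for any scalar coordinate $q$ of $U$ with associated delay $\delta_q^k\le\varepsilon$ (namely $\phi_i^k$ when $q$ lies in the $x$- or $u$-part of player $i$, and $\varphi_{(i,j)}^k$ when $q$ lies in a $w_{(i,j),\cdot}$-part), a direct telescoping gives
\begin{equation*}
(\overset{\frown}{U}_k)_q = (U_{k-\delta_q^k})_q = (U_k)_q + \sum_{d\in A_q(k)} (U_d - U_{d+1})_q,
\end{equation*}
where $A_q(k) = \{d\in[k-\delta_q^k,k-1] : i_d \text{ owns } q\}$; non-owner activations drop out by the sparsity observation above, and the bound $A_q(k)\subseteq\{k-\varepsilon,\ldots,k-1\}$ follows directly from Assumption~\ref{AssumpDelay}. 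I would then take $J(k) = \bigcup_q A_q(k)$ as the common index set, noting that for any fixed coordinate $q$, additional indices $d\in J(k)\setminus A_q(k)$ that activate players other than $q$'s owner contribute zero to coordinate $q$.

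The main obstacle I expect is coordinating the telescoping when two coordinates of the same player carry different delays, for instance $\phi_i^k \neq \varphi_{(i,j)}^k$: a single $d$ with $i_d = i$ can lie inside one coordinate's window and outside another's, in which case naively including or excluding $d$ from $J(k)$ would over- or under-count activations for the shorter-window coordinate. Addressing this requires a careful refinement, either by restricting $J(k)$ block-by-block so that each sub-block of a player shares a common delay, or by reinterpreting the per-coordinate telescoping so that the mismatched contributions cancel. In either implementation, the quantitative content of the lemma, namely $J(k)\subseteq\{k-1,\ldots,k-\varepsilon\}$, is immediate from $\delta_q^k\le\varepsilon$, which is what the convergence analysis downstream actually exploits.
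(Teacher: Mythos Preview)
The paper does not supply its own proof of this lemma; it simply attributes the result to the ARock framework \cite{Zhimin2016ARock}. Your telescoping argument is exactly the standard ARock proof: each scalar coordinate is piecewise constant between activations of its owner, so the delayed value is recovered by summing the increments $U_d-U_{d+1}$ over the owner's activation times inside the delay window, and one then takes $J(k)$ to be the union of these per-coordinate index sets.

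The obstacle you flag, however, is real and is \emph{not} addressed either by your proposal or by the paper's bare citation. In the original ARock setting each coordinate block has a unique owner and its own delay, so an index $d$ contributed to $J(k)$ by coordinate $q'$ can never satisfy $i_d=$ owner of $q$ for a different coordinate $q$; the extra terms in $\sum_{d\in J(k)}(U_d-U_{d+1})_q$ therefore vanish automatically. In the present paper a single player $i$ simultaneously updates $x_i$, $u_i$, and every $w_{(i,j),i}$, yet these sub-blocks are allowed distinct delays $\phi_i^k$ and $\varphi_{(i,j)}^k$. If $i_d=i$ with $d$ lying in the longer window but outside the shorter one, the term $(U_d-U_{d+1})$ is nonzero on \emph{all} of player $i$'s sub-blocks, so including $d$ in $J(k)$ over-counts for the shorter-delay sub-block while excluding it under-counts for the longer one. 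Your two suggested fixes do not close this: a block-dependent $J(k)$ contradicts the single-set statement of the lemma, and there is no cancellation mechanism for the mismatched increments. The identity as written holds only under the additional hypothesis that all sub-blocks owned by the same player share a common delay; otherwise one must carry the coordinate-wise telescoping identity directly into the proof of Lemma~\ref{ConditionQwang}, where the per-coordinate bound $|A_q(k)|\le\varepsilon$ still delivers the needed estimate.
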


Based on the above analysis, we derive an explicit upper bound on the expected distance between $U_{k+1}$ and $U^*$.

\begin{lemma}\label{ConditionQwang}
Let Assumptions \ref{AssumSetconvex}-\ref{AssumpProbibility} hold. Consider the sequence ${\left\{ {{U_k}} \right\}_{k \in \mathbb{N}}}$ generated from \eqref{asyniter}. Set the constant ${p_{\min }} = \min \left\{ {{p_1},{\rm{ }}{p_2},{\rm{ }} \cdots ,{\rm{ }}{p_m}} \right\}$ and the local relaxed factor ${\eta _i} = \eta /(m{p_i})$. Then, for any ${U^*} \in {\rm{zer}}{\text{(}}{T_A} + {T_M} + {T_C}{\text{)}}$ and a constant $c>0$, it holds that
\begin{flalign} \label{rtoq}
\begin{gathered}
  \quad \mathbb{E}\left( {\left\| {{U_{k + 1}} - {U^*}} \right\|_{{T_S}}^2|{\mathcal{Z}_k}} \right) \hfill \\
   \leq \left\| {{U_k} - {U^*}} \right\|_{{T_S}}^2 + \frac{c}{m}\sum\limits_{d \in J\left( k \right)} {\left\| {{U_d} - {U_{d + 1}}} \right\|_{{T_S}}^2}  \hfill \\
  \quad  - \frac{1}{m}\left( {\frac{1}{\eta } - \frac{{\kappa \left( {{T_S}} \right)}}{{m{p_{\min }}}} - \frac{{\left| {J\left( k \right)} \right|}}{c}} \right)\left\| {{{\tilde U}_{k + 1}} - {U_k}} \right\|_{{T_S}}^2 \hfill \\
\end{gathered}
\end{flalign}
\end{lemma}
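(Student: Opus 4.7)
The plan is to expand the squared weighted distance, take conditional expectation with respect to the random activation, substitute in the delayed vector via Lemma~\ref{OIYRE}, and then balance the resulting cross terms using Young's inequality with free parameter $c$.

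First I would start from the compact asynchronous recursion \eqref{asyniter} and expand
\begin{flalign*}
\|U_{k+1} - U^*\|_{T_S}^2 = \|U_k - U^*\|_{T_S}^2 - 2\langle \eta_{i_k} R_{i_k}\overset{\frown}{U}_k,\, U_k - U^*\rangle_{T_S} + \|\eta_{i_k} R_{i_k}\overset{\frown}{U}_k\|_{T_S}^2 .
\end{flalign*}
Taking $\mathbb{E}[\cdot\mid\mathcal{Z}_k]$, since $U_k$ and $\overset{\frown}{U}_k$ are $\mathcal{Z}_k$-measurable and only $i_k$ is random with $\mathrm{Prob}(i_k=i)=p_i$ (Assumption~\ref{AssumpProbibility}), the choice $\eta_i=\eta/(mp_i)$ collapses the linear expectation to $\mathbb{E}[\eta_{i_k}R_{i_k}\overset{\frown}{U}_k\mid\mathcal{Z}_k] = (\eta/m)\,R\overset{\frown}{U}_k$. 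For the quadratic term, I would write $\mathbb{E}\|\eta_{i_k} R_{i_k}\overset{\frown}{U}_k\|_{T_S}^2 = (\eta^2/m^2)\sum_i p_i^{-1}\|R_i\overset{\frown}{U}_k\|_{T_S}^2$, then pass through the standard Euclidean norm via $\lambda_{\min}(T_S)\|\cdot\|^2\le\|\cdot\|_{T_S}^2\le\lambda_{\max}(T_S)\|\cdot\|^2$ (together with the disjoint-support identity $\sum_i\|R_iU\|^2=\|RU\|^2$) to obtain the bound $\le (\kappa(T_S)/(m^2 p_{\min}))\,\|R\overset{\frown}{U}_k\|_{T_S}^2$. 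Using the definition \eqref{Fuzhuite} of the virtual synchronous iterate, $\|R\overset{\frown}{U}_k\|_{T_S}^2 = \eta^{-2}\|\tilde U_{k+1}-U_k\|_{T_S}^2$, so this quadratic term contributes $(\kappa(T_S)/(m^2 p_{\min}))\|\tilde U_{k+1}-U_k\|_{T_S}^2$; this is the mechanism that produces the $\kappa(T_S)$ factor in the target inequality.

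Next I would attack the cross term. Invoking Lemma~\ref{OIYRE} I substitute $U_k-U^* = (\overset{\frown}{U}_k-U^*) - \sum_{d\in J(k)}(U_d-U_{d+1})$, which splits the inner product into a ``fixed-point part'' and a ``delay part''. For the fixed-point part, I apply Lemma~\ref{LemOperTT}: since $T$ is $\gamma$-averaged it is in particular nonexpansive, and choosing $U^*\in\mathrm{fix}(T)$ in \eqref{tildeTAverage} gives the firm monotonicity inequality
\begin{flalign*}
\langle R\overset{\frown}{U}_k,\overset{\frown}{U}_k-U^*\rangle_{T_S} \ge \tfrac{1}{2}\|R\overset{\frown}{U}_k\|_{T_S}^2.
\end{flalign*}
After multiplication by $-2\eta/m$ and converting to the $\tilde U_{k+1}-U_k$ parametrization, this produces exactly the $-\tfrac{1}{m\eta}\|\tilde U_{k+1}-U_k\|_{T_S}^2$ contribution in the stated bound.

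For the delay part, $\tfrac{2\eta}{m}\sum_{d\in J(k)}\langle R\overset{\frown}{U}_k,\,U_d-U_{d+1}\rangle_{T_S}$, I would apply Young's inequality with the free parameter $c>0$ in the form $2\langle a,b\rangle_{T_S}\le c^{-1}\|a\|_{T_S}^2+c\|b\|_{T_S}^2$ to each summand, with $a$ proportional to $\eta R\overset{\frown}{U}_k = U_k-\tilde U_{k+1}$ and $b=U_d-U_{d+1}$, then sum over the $|J(k)|$ terms. This contributes $\tfrac{|J(k)|}{mc}\|\tilde U_{k+1}-U_k\|_{T_S}^2 + \tfrac{c}{m}\sum_{d\in J(k)}\|U_d-U_{d+1}\|_{T_S}^2$. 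Collecting the three $\|\tilde U_{k+1}-U_k\|_{T_S}^2$ contributions into the single coefficient $-\tfrac{1}{m}\big(\tfrac{1}{\eta}-\tfrac{\kappa(T_S)}{mp_{\min}}-\tfrac{|J(k)|}{c}\big)$ and leaving the delay energy term as the $\tfrac{c}{m}\sum_{d\in J(k)}\|U_d-U_{d+1}\|_{T_S}^2$ remainder yields \eqref{rtoq}.

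The main obstacle I expect is bookkeeping rather than any deep inequality: keeping track of which pieces of the expansion feed the ``signal'' term $\|\tilde U_{k+1}-U_k\|_{T_S}^2$ (through both Young's inequality and the second moment) versus the ``delay reservoir'' $\sum_{d\in J(k)}\|U_d-U_{d+1}\|_{T_S}^2$, and in particular ensuring the factor $\kappa(T_S)$ arises only from the passage between $\|\cdot\|$ and $\|\cdot\|_{T_S}$ in the second-moment estimate, while the nonexpansive inner-product bound supplies the clean $1/\eta$ coefficient needed for the later supermartingale argument.
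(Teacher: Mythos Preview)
Your proposal is correct and follows essentially the same route as the paper's own proof: expand the squared distance, take conditional expectation using $\eta_i=\eta/(mp_i)$, bound the quadratic term via $\kappa(T_S)/p_{\min}$ and the disjoint-support identity, split the cross term with Lemma~\ref{OIYRE}, apply the firm-type inequality coming from $\gamma$-averagedness for the fixed-point part, and finish the delay part with Young's inequality in $c$. The only cosmetic difference is that the paper cites an external lemma for the inequality $\langle R\overset{\frown}{U}_k,\overset{\frown}{U}_k-U^*\rangle_{T_S}\ge\tfrac12\|R\overset{\frown}{U}_k\|_{T_S}^2$, whereas you derive it (correctly) from Lemma~\ref{LemOperTT}.
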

\begin{proof}
See Appendix \ref{ProfCondition}.
\end{proof}

Note that it is not able to ensure that $\mathbb{E}( {\| {{U_{k + 1}} - {U^*}} \|_{{T_S}}^2|{\mathcal{Z}_k}} )$ $\leq \| {{U_k} - {U^*}} \|_{{T_S}}^2$ due to the existence of the positive asynchronicity term $\sum\nolimits_{d \in J\left( k \right)} {\left\| {{U_d} - {U_{d + 1}}} \right\|_{{T_S}}^2} $ caused by the delay. It is therefore necessary to introduce a variable containing involved delayed vectors. Here, we define the stacked vector ${\mathbf{Z}} = {\text{col}}\left\{ {{Z_l}} \right\}_{l = 0}^\varepsilon$, and an operator $\Phi :{\mathbf{Z}} \to {\text{col}}\{ {\tilde Z_0},{\tilde Z_1}, \cdots ,{\tilde Z_\varepsilon }\}$, where its components are involved
\begin{flalign*}
\begin{gathered}
  {{\tilde Z}_0} = {T_S}{Z_0} + \varepsilon \sqrt {\frac{{{p_{\min }}}}
{{\kappa \left( {{T_S}} \right)}}} {T_S}({Z_0} - {Z_1}) \hfill \\
  {{\tilde Z}_l} = \sqrt {\frac{{{p_{\min }}}}
{{\kappa \left( {{T_S}} \right)}}} {T_S}((l - \varepsilon  - 1){Z_{l - 1}} + (2\varepsilon  - 2l + 1){Z_l} \hfill \\
  \qquad\quad + (l - \varepsilon ){Z_{l + 1}}),{\text{ }}l = 1, \cdots ,\varepsilon - 1 \hfill \\
  {{\tilde Z}_\varepsilon } = \sqrt {\frac{{{p_{\min }}}}
{{\kappa \left( {{T_S}} \right)}}} {T_S}({Z_\varepsilon } - {Z_{\varepsilon  - 1}}) \hfill \\
\end{gathered}
\end{flalign*}
To this end, we recombine the vectors ${{\mathbf{U}}_k} = {\text{col}}\left\{ {{U_{k - l}}} \right\}_{l = 0}^\varepsilon$ and ${{\mathbf{U}}^ * } = {1_{\varepsilon  + 1}} \otimes {U^ * }$. Then, we have that
\begin{flalign}  \label{QiWangHeitiU}
\begin{gathered}
  \quad \left\| {{{\mathbf{U}}_k}} \right\|_\Phi ^2 = \left\| {{U_k} - {U^ * }} \right\|_{{T_S}}^2 + \sqrt {\frac{{{p_{\min }}}}{{\kappa \left( {{T_S}} \right)}}} \hfill \\
  \qquad\quad\quad\quad\,  \times \sum\limits_{d = k - \varepsilon }^{k - 1} {\left( {d - \left( {k \!-\! \varepsilon } \right) + 1} \right)\left\| {{U_d} \!-\! {U_{d + 1}}} \right\|_{{T_S}}^2}  \hfill \\
\end{gathered}
\end{flalign}

\begin{theorem}\label{Fundainequa}
Let Assumptions \ref{AssumSetconvex}-\ref{AssumpProbibility} hold. Consider the constant $\beta  = {\eta ^{ - 1}} - 2\varepsilon \sqrt {\kappa \left( {{T_S}} \right)/{p_{\min }}} /m - \kappa \left( {{T_S}} \right)/(m{p_{\min }}) > 0$. Then, it holds that
\begin{flalign} \label{HeiUkZ}
\mathbb{E}( {\left\| {{{\mathbf{U}}_{k + 1}} \!\!-\!\! {{\mathbf{U}}^ * }} \right\|_\Phi ^2|{\mathcal{Z}_k}} ) \!\!\leq \!\| {{{\mathbf{U}}_k} \!-\!\! {{\mathbf{U}}^ * }} \|_\Phi ^2 \!\!-\!\! \frac{\beta }
{m}\| {{{\tilde U}_{k + 1}} \!\!-\!\! {U_k}} \|_{{T_S}}^2
\end{flalign}
Furthermore, there exists $\tilde \Omega  \in \mathcal{F}$ such that ${\rm{Prob}}(\tilde \Omega ) = 1$, and for every $\xi  \in \tilde \Omega $, the sequence ${\{ {\| {{{\mathbf{U}}_{k }} - {{\mathbf{U}}^ * }} \|_\Phi ^2} \}_{k \in \mathbb{N}}}$ converges.
\end{theorem}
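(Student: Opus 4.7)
The plan is to establish \eqref{HeiUkZ} by combining Lemma \ref{ConditionQwang} with the explicit closed form \eqref{QiWangHeitiU} of the $\Phi$-weighted distance, choosing the free parameter $c$ so that the positive asynchronicity tail is absorbed, and then invoking a supermartingale convergence theorem to obtain almost sure convergence.

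First, I would shift \eqref{QiWangHeitiU} by one index to write $\|\mathbf{U}_{k+1}-\mathbf{U}^{*}\|_\Phi^2$ and then subtract $\|\mathbf{U}_k-\mathbf{U}^{*}\|_\Phi^2$. The two weighted sums over the past $\varepsilon$ consecutive jumps telescope cleanly: after cancellation, the surviving contribution is
$\sqrt{p_{\min}/\kappa(T_S)}\,\bigl(\varepsilon\|U_{k+1}-U_k\|_{T_S}^2-\sum_{d=k-\varepsilon}^{k-1}\|U_d-U_{d+1}\|_{T_S}^2\bigr)$
together with the change $\|U_{k+1}-U^{*}\|_{T_S}^2-\|U_k-U^{*}\|_{T_S}^2$. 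This is the key algebraic reduction.

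Second, I would take $\mathbb{E}(\cdot\mid\mathcal{Z}_k)$ and substitute Lemma \ref{ConditionQwang} for $\mathbb{E}(\|U_{k+1}-U^{*}\|_{T_S}^2\mid\mathcal{Z}_k)$. Choosing $c=m\sqrt{p_{\min}/\kappa(T_S)}$ forces the positive term $\tfrac{c}{m}\sum_{d\in J(k)}\|U_d-U_{d+1}\|_{T_S}^2$ supplied by the lemma to be dominated, summand by summand, by the negative tail obtained from the telescoping, since $J(k)\subseteq\{k-\varepsilon,\dots,k-1\}$. In other words, the two past-jump sums cancel out, leaving only terms in $\|\tilde U_{k+1}-U_k\|_{T_S}^2$. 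Next, I would control $\mathbb{E}(\|U_{k+1}-U_k\|_{T_S}^2\mid\mathcal{Z}_k)$: using \eqref{asyniter}, $\eta_i=\eta/(mp_i)$, the block-diagonality of $T_S$, and $\sum_i\tfrac{1}{p_i}\|R_i\overset{\frown}{U}_k\|_{T_S}^2\leq\tfrac{1}{p_{\min}}\|R\overset{\frown}{U}_k\|_{T_S}^2$, I obtain $\mathbb{E}(\|U_{k+1}-U_k\|_{T_S}^2\mid\mathcal{Z}_k)\le\tfrac{1}{m^2 p_{\min}}\|\tilde U_{k+1}-U_k\|_{T_S}^2$. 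Gathering all coefficients of $\|\tilde U_{k+1}-U_k\|_{T_S}^2$ and using $1/\sqrt{p_{\min}\kappa(T_S)}\le\sqrt{\kappa(T_S)/p_{\min}}$ to merge the residual cross-term into a second copy of $\varepsilon\sqrt{\kappa(T_S)/p_{\min}}/m$, the net coefficient collapses to exactly $-\beta/m$, which yields \eqref{HeiUkZ}.

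Finally, since the assumption $\beta>0$ and \eqref{HeiUkZ} show that $\{\|\mathbf{U}_k-\mathbf{U}^{*}\|_\Phi^2\}$ is a nonnegative supermartingale adapted to $\{\mathcal{Z}_k\}$, almost sure convergence on a full-measure event $\tilde\Omega\in\mathcal{F}$ follows from the Robbins–Siegmund (nonnegative supermartingale) convergence theorem; as a byproduct one also gets $\sum_k\mathbb{E}\|\tilde U_{k+1}-U_k\|_{T_S}^2<\infty$, which will be useful downstream. I expect the main obstacle to be the bookkeeping in the second step: one must track three distinct terms of size $\varepsilon$ (the telescoping remainder, the $|J(k)|/c$ term of Lemma \ref{ConditionQwang}, and the quadratic jump expectation) and verify that the specific choice $c=m\sqrt{p_{\min}/\kappa(T_S)}$ is precisely what compresses them into the single constant $2\varepsilon\sqrt{\kappa(T_S)/p_{\min}}/m$ appearing in $\beta$, while simultaneously annihilating the past-jump sum.
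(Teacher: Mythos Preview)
Your proposal is correct and follows essentially the same route as the paper: the paper also chooses $c=m\sqrt{p_{\min}/\kappa(T_S)}$ in Lemma~\ref{ConditionQwang}, expands $\|\mathbf{U}_{k+1}-\mathbf{U}^*\|_\Phi^2$ via \eqref{QiWangHeitiU}, bounds $\mathbb{E}(\|U_{k+1}-U_k\|_{T_S}^2\mid\mathcal{Z}_k)$ through \eqref{HuRiUU}, and then invokes a supermartingale-type lemma (\cite[Lemma~13]{Zhimin2016ARock}, equivalent to Robbins--Siegmund here) for almost sure convergence. The only cosmetic difference is that the paper keeps the extra $\kappa(T_S)$ factor from \eqref{HuRiUU} rather than exploiting the block-diagonality of $T_S$ and then relaxing via $1/\sqrt{p_{\min}\kappa(T_S)}\le\sqrt{\kappa(T_S)/p_{\min}}$ as you do; both paths yield the identical constant $\beta$.
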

\begin{proof}
See Appendix \ref{ProfFundainequa}.
\end{proof}

\begin{theorem}\label{AsyAlgorithmConver}
Let Assumptions \ref{AssumSetconvex}-\ref{AssumpProbibility} hold. Consider the sequence ${\left\{ {{U_k}} \right\}_{k \in \mathbb{N}}}$ generated from \eqref{asyniter}. Then, ${\left\{ {{U_k}} \right\}_{k \in \mathbb{N}}}$ converges to ${U^*}= {\rm{col}}\{ {x^ * },{u^ * },{w^ * }\} \in {\rm{fix}}{\text{(}}T{\text{)}}$. Moreover, $x^*$ is a solution to the VI problem \eqref{VIproblem}, i.e, ${x^ * } \in {\rm{SOL}}\left( {\mathcal{X},F} \right)$.
\end{theorem}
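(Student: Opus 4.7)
The plan is to lift the deterministic Fej\'er-type argument used for \textbf{Theorem \ref{Convergence}} to the almost-sure stochastic setting by exploiting \textbf{Theorem \ref{Fundainequa}} as the main engine, and then to identify every cluster point of $\{U_k\}$ with an element of $\mathrm{fix}(T)$ via the continuity of $R=\mathrm{Id}-T$ and Lemma \ref{OIYRE}. First, I would work pathwise on the almost-sure event $\tilde{\Omega}$ produced by Theorem \ref{Fundainequa}. Taking full expectation in \eqref{HeiUkZ} and telescoping over $k$ gives $\sum_{k} \mathbb{E}\|\tilde U_{k+1}-U_k\|_{T_S}^2<\infty$; combined with the Robbins--Siegmund style inequality already embedded in \eqref{HeiUkZ}, this yields $\|\tilde U_{k+1}-U_k\|_{T_S}\to 0$ almost surely. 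Since only one coordinate block moves at each iteration and $\eta_i=\eta/(mp_i)$ is bounded, one also obtains $\sum_k \|U_k-U_{k+1}\|_{T_S}^2<\infty$ along $\tilde{\Omega}$.

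Second, I would translate the vanishing of $\|\tilde U_{k+1}-U_k\|_{T_S}$ into the vanishing of the residual of $T$ at $U_k$. From the auxiliary update \eqref{Fuzhuite}, $\tilde U_{k+1}-U_k=-\eta R\overset{\frown}{U}_k$, so $R\overset{\frown}{U}_k\to 0$ almost surely. By Lemma \ref{OIYRE},
\begin{equation*}
\bigl\|\overset{\frown}{U}_k-U_k\bigr\|_{T_S}\le \sum_{d\in J(k)}\|U_d-U_{d+1}\|_{T_S}\le \varepsilon\!\!\max_{k-\varepsilon\le d\le k-1}\!\!\|U_d-U_{d+1}\|_{T_S},
\end{equation*}
which tends to $0$ by the square-summability above. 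Using that $T$ is $\gamma$-averaged (Lemma \ref{LemOperTT}) and hence nonexpansive, $R$ is Lipschitz in the $T_S$-norm; combining $\overset{\frown}{U}_k-U_k\to 0$ with $R\overset{\frown}{U}_k\to 0$ therefore gives $RU_k\to 0$ almost surely.

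Third, I would extract a cluster point and promote it to the unique limit. From \eqref{QiWangHeitiU} the convergent sequence $\|\mathbf{U}_k-\mathbf{U}^*\|_\Phi^2$ dominates $\|U_k-U^*\|_{T_S}^2$, so $\{U_k\}$ is bounded along $\tilde{\Omega}$; let $U^{**}$ be any $T_S$-cluster point along a subsequence $U_{k_j}\to U^{**}$. Continuity of $R$ and $RU_k\to 0$ yield $RU^{**}=0$, i.e.\ $U^{**}\in\mathrm{fix}(T)=\mathrm{zer}(T_A+T_M+T_C)$ by the equivalence established just before Assumption \ref{SynStepsizes}. Since inequality \eqref{HeiUkZ} is valid for \emph{any} element of $\mathrm{zer}(T_A+T_M+T_C)$, I may reapply Theorem \ref{Fundainequa} with $U^*$ replaced by this particular $U^{**}$; the resulting sequence $\|\mathbf{U}_k-\mathbf{1}_{\varepsilon+1}\otimes U^{**}\|_\Phi^2$ converges almost surely, and since one subsequence tends to $0$, the whole sequence does, forcing $U_k\to U^{**}$ almost surely (a stochastic Opial-type argument).

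Finally, writing $U^{**}=\mathrm{col}\{x^*,u^*,w^*\}$, the fixed-point identity together with $\mathrm{fix}(T)=\mathrm{zer}(T_A+T_M+T_C)$ means that $(x^*,u^*,w^*)$ satisfies the local equilibrium system \eqref{DisMNcondit}; Lemma \ref{DistriMN} then concludes $x^*\in\mathrm{SOL}(\mathcal{X},F)$. I expect the main obstacle to be the third step: the Fej\'er-type inequality is phrased in the augmented metric $\|\cdot\|_\Phi$ on the stacked vector $\mathbf{U}_k$, while the cluster-point analysis must be done for $U_k$ in the $T_S$-norm, so one has to carefully transfer the monotonicity and re-instantiate Theorem \ref{Fundainequa} at the cluster point to rule out multiple limits without losing the asynchronicity-absorbing term $\sqrt{p_{\min}/\kappa(T_S)}$ baked into $\Phi$.
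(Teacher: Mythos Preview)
Your proposal is correct and follows essentially the same route as the paper: invoke Theorem~\ref{Fundainequa}, pass through an Opial-type argument to obtain almost-sure convergence of $\{U_k\}$ to a point of $\mathrm{fix}(T)=\mathrm{zer}(T_A+T_M+T_C)$, and finish with Lemma~\ref{DistriMN}. The paper compresses the entire second and third steps of your outline into a one-line citation of Opial's Lemma, whereas you unpack what that lemma actually requires in the stochastic, delayed setting (summability of $\|\tilde U_{k+1}-U_k\|_{T_S}^2$ via Robbins--Siegmund, transfer to $\|U_{k+1}-U_k\|_{T_S}^2$ using block-diagonality of $T_S$, $R\overset{\frown}{U}_k\to 0\Rightarrow RU_k\to 0$ via Lemma~\ref{OIYRE} and nonexpansiveness of $T$, and the cluster-point uniqueness step); your identification of the re-instantiation of \eqref{HeiUkZ} at a \emph{random} cluster point as the delicate point is exactly the place where a careful stochastic Opial argument needs the usual countable-dense-subset trick, which the paper leaves implicit in its citation.
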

\begin{proof}
Since ${\{ {\| {{{\mathbf{U}}_{k }} - {{\mathbf{U}}^ * }} \|_\Phi ^2} \}_{k \in \mathbb{N}}}$ converges through Theorem \ref{Fundainequa}, it follows from Opial's Lemma \cite{Opial1967Weak} that the sequence ${\left\{ {{U_k}} \right\}_{k \in \mathbb{N}}}$ also converges to the fixed point of the operator $T$, i.e., $U_k \to U^*$, which means that $x_k \to x^*$ and $u_k \to u^*$. On the other hand, the pair $({x^*},{u^*})$ satisfies the condition system \eqref{DisMNcondit} because ${\text{fix(}}T{\text{)}} = {\text{zer(}}{T_A} + {T_M} + {T_C}{\text{)}}$. Therefore, by Lemma \ref{DistriMN}, it holds that ${x^ * } \in {\rm{SOL}}\left( {\mathcal{X},F} \right)$.
\end{proof}

\begin{remark} \label{RemaImpac}
The i.i.d process assumption for each player is a premise for direct evaluation of conditional expectations (i.e., Lemma \ref{ConditionQwang} and Theorem \ref{Fundainequa}). From Theorem \ref{Fundainequa}, both $\varepsilon$ and ${p_{\min }}$ have impacts on the convergence performance of Algorithm \ref{alg:2}. Specifically, both the smaller value of $\varepsilon$ and bigger value of ${p_{\min }}$ give greater value of $\beta$. This further leads to an increased decrease in ${\{ {\| {{{\mathbf{U}}_{k }} - {{\mathbf{U}}^ * }} \|_\Phi ^2} \}_{k \in \mathbb{N}}}$, thereby providing better performance of ASY\_DPDP\_GNE. As presented in Remark \ref{DiscuProb}, the local Poisson rate might determine the parameter ${p_{\min }}$, and thus imposes an indirect effect on the convergence behavior of ASY\_DPDP\_GNE.
\end{remark}


\subsection{Discussions}\label{DiscTheore}

Due to page limitation, this section briefly discusses the advantages of our algorithm frameworks and analysis approach compared with several well-know schemes.

\emph{ 1) Uncoordinated step-sizes}: Both DPDP\_GNE and ASY\_DPDP\_GNE adopt the uncoordinated and independent fixed step-sizes and share the same selection conditions (see Assumption \ref{SynStepsizes}). Such design is not uncommon in \cite{Yi2019splitting,Franci2021Distributed,Zhaojian2021Energy}. Although we may also adopt decaying step-sizes or global step-sizes in our algorithms, for potentially improving the rate of convergence in practice, we opt not to. The main reason is that in the context of multiplayer systems such choice would require global coordination, that is contradictory to our objective of devising distributed GNE algorithms. As a positive side-effect, the use of uncoordinated and fixed step-sizes can simplify the convergence analysis compared with \cite{Yipeng2021Distributed} employing decaying one. Besides, uncoordinated step-sizes always have less restrictive (or less conservative) selections than global step-sizes \cite{Peng2020Asynchronous}.

\emph{ 2) New distributed extensions}: With proper adjustments and modifications for algorithm frameworks, the extension of new forward-backward-like algorithms is significantly derived. Specifically, recalling the synchronous framework \eqref{Ppredphras}-\eqref{Ppredphrasbarwiji}, we substitute $\bar w_{\left( {i,j} \right),i}$ for \eqref{Ppredphrasb} with $2\bar w_{\left( {i,j} \right),i}-w_{\left( {i,j} \right),i}$, and $\bar u _i$ for \eqref{Ppredphrasc} with $2\bar u _i-u _i$. The explicit forms of $\bar w_{\left( {i,j} \right),i}$, $\bar u _i$ and $\bar x _i$ can be easily derived from \eqref{Ppredphrasbarux}-\eqref{Ppredphrasbarwiji}. Next, removing ${\sigma _i}{A_i}({{\bar x}_i - x_i})$ and $\kappa _{\left( {i,j} \right)}{E_{ij}}(\bar u_i -u_i)$ in the correction step, the new synchronous distributed forward-backward algorithm is obtained. On the other hand, recalling the asynchronous framework \eqref{Asywbariter}-\eqref{Asyxkbariter}, we replace $\bar w_{\left( {{i_k},j} \right),{i_k}}^k$ in \eqref{Asyubariter} with $2\bar w_{\left( {{i_k},j} \right),{i_k}}^k - w_{\left( {{i_k},j} \right),{i_k}}^k$, $\bar u_{{i_k}}^k$ from \eqref{Asyxbariter} with $2\bar u_{{i_k}}^k - u_{{i_k}}^k$. Then, removing ${\sigma _{{i_k}}}{A_{{i_k}}}(\bar x_{{i_k}}^k \!-\! x_{{i_k}}^{k - \phi _{{i_k}}^k})$ in \eqref{Asyxkbariterb} and ${E_{{i_k}j}}(\bar u_{{i_k}}^k - u_{{i_k}}^{k - \phi _{{i_k}}^k})$ in \eqref{Asyxkbariterc} yields the new asynchronous distributed forward-backward algorithm. Note that the above new extended algorithms also enjoy distinct attributions as discussed in Remark \ref{ReAdvanEdge}. Besides, each player in both extensions does not require latest auxiliary information from neighbors, compared with recent works \cite{Yi2019splitting,Pavel2020Distributed,Franci2021Distributed}.

\emph{ 3) Improved convergence rates}: These forward-backward-based algorithms \cite{Yi2019splitting,Pavel2020Distributed,Peng2020Asynchronous,Cenedese2019An} can be viewed as our analysis framework \eqref{OperatortildeTT}-\eqref{OperatorTT}. Here we only present key details for \cite[Eq. 10]{Peng2020Asynchronous}. Define ${T_M}:U \to ( {{A^{\rm T}}u , - \tilde Vz - Ax,\tilde Vu } )$, ${T_A}:U \to ( {{N_\Omega }\left( x \right),{N_{\mathbb{R}_ + ^{mr}}}\left( u \right),0} )$, and ${T_C}:U \to ( {F\left( x \right),b,0} )$, where $\tilde V $ stands for incidence matrix and $z$ is edge-based variable \cite{Peng2020Asynchronous}. The fixed-point iteration \cite[Eq. 10]{Peng2020Asynchronous} can be given by \eqref{OperatorTT} with ${T_\mathcal{K}}={T_M}$, ${T_P} = {T_S} $ and ${T_S}:U \to ({\tau ^{ - 1}}x + {A^{\rm T}}u,Ax + {\sigma ^{ - 1}}u + {{\tilde V}^{\rm T}}z,\tilde Vu + {\omega ^{ - 1}}z)$, where $\tau ,\sigma ,\omega $ are positive step-sizes (scalars). Under conditions $\omega  < 1$, $\sigma  < 1/(1 + {\lambda _{\max }}({{\tilde V}^{\rm T}} \tilde V))$, $\tau  < 2\mu (2 - \eta (1 + \alpha ))/(L_F^2 + 2\mu (2 - \eta \left( {1 + \alpha } \right))(1 + {\lambda _{\max }}({A ^{\rm T}}A)))$ with $0<\alpha<1$ and $0<\eta<2/(1+\alpha)$, it can be verified that ${T_{\tilde P}} - \eta \left( {\alpha  + 1} \right){T_S}$ is positive definite via Lemma \ref{PosiStepsizes}, where ${T_{\tilde P}}:U \to (2{\tau ^{ - 1}}x - L_F^2x/(2\mu ) + {A^{\rm T}}u,Ax + 2{\sigma ^{ - 1}}u + {{\tilde V}^{\rm T}}z,\tilde Vu + 2{\omega ^{ - 1}}z)$. Then, the analysis from Lemma \ref{LemOperTT} and Theorem \ref{Convergence} can be used to obtain $\| {{U_k} - {U_{k + 1}}} \|_{{\eta ^{ - 1}}{T_S}}^2 = o(1/(k + 1))$. This generalizes the result in \cite[Lemma 2]{Peng2020Asynchronous} and improves its $O(1/(k + 1))$ convergence rate to $o(1/(k+1))$. In the same way, our analytical framework with appropriate selections of $T_S$, $T_P$ and ${T_\mathcal{K}}$ can also be applied to \cite{Yi2019splitting,Pavel2020Distributed,Cenedese2019An}, yielding or improving their convergence rates to $o(1/(k+1))$.

\section{Performance Evaluation} \label{SecPerfo}

In this section, we validate the performance of ASY\_DPDP\_GNE on the practical example of the Cournot market competition problem introduced in Section \ref{SubPrSce}. Assume that there are ten factories,
one commodity, and four purchasers. The procurement relationship between factories and purchasers is connected as shown in \cite[Fig. 1]{Zheng2023Stochastic}. The parameters ${g_{j,s}}$, ${\rho _{j,s}}$, ${a _{i,s}}$, ${c _{i,s}}$ are randomly draw from intervals $[20,50]$, $[2,3]$, $[0.1,1]$ and $[1,10]$. The production of upper bounds $q_{j,\max }^i$ is uniformly set as $50$, the limited storage capacity ${b_s}$ is respectively defined as $30, 50, 40, 20$. Obviously, for each $i \in \mathcal{W}$, ${\Omega _i} = \prod\nolimits_{j \in {\mathcal{D}_j}} {{\Omega _{i,j}}} $ $A_i$ and $f_i$ satisfy Assumption \ref{AssumSetconvex}, and the pseudo-gradient mapping $F$ satisfies Assumption \ref{AssumpseudoLipfucnti}. Thus, the Cournot market competition problem admits a unique variational GNE. Following from Assumption \ref{SynStepsizes} and Theorem \ref{Fundainequa}, we set the step-sizes of ASY\_DPDP\_GNE as follows: ${\alpha _i} = 0.25$, ${\kappa _{\left( {i,j} \right)}} = \max \left\{ {\left| {{\mathcal{N}_i}} \right|,\left| {{\mathcal{N}_j}} \right|} \right\}$, ${\sigma _i} = 0.5{(1 - {\alpha _i})^2}/\sum\nolimits_{j \in {\mathcal{N}_i}} {{\kappa _{\left( {i,j} \right)}}}$, ${\tau _i} = {(1 - {\alpha _i})^2}/(15(1 - {\alpha _i})+ 16{\sigma _i}(1 + {\sigma _i}\sum\nolimits_{j \in {\mathcal{N}_i}} {{\kappa _{\left( {i,j} \right)}}} ))$ and $\eta  = \left( {m - 1} \right)\cdot{p_{\min }}/(\sqrt {\kappa ({T_S})} (2\varepsilon \sqrt {{p_{\min }}}  + \sqrt {\kappa ({T_S})} ))$. We use the residuals $\sum\nolimits_{i = 1}^m {\| {x_i^k - x_i^ * } \|/\| {x_i^ * } \|}$ and $\sum\nolimits_{i = 1}^m {\| {u_i^k - u_g^ * } \|/m\| {u_g^ * } \|}$ to plot the dynamics trajectories, where $x_i^*$ is computed by a centralized method. To study the performance of ASY\_DPDP\_GNE, we conduct the simulation via the following three aspects.

\emph{ 1) Effects of active proportions}: This case investigates the effects of active proportions on the performance. According to Assumptions \ref{AssumpDelay} and \ref{AssumpProbibility}, we fix the the
upper bound $\varepsilon =5$, and consider three groups of active proportions, where all $p_i$, $i \in \mathcal{W}$, are selected to satisfy $\sum\nolimits_{i = 1}^m {{p_i}}  = 1$, but the minimal probabilities $p_{min}$ are respectively as $0.03$, $0.06$ and $0.1$. Fig. \ref{DiffProp} confirms the algorithm performance under different active proportions, revealing that bigger minimal probability results in better convergence performance.

\emph{ 2) Effects of delays}: This case fixes the probability $p_i = 1/m$ and consider different maximal delays $\varepsilon  \in \{ 5,10,15,20\} $. The simulation results in Fig. \ref{DiffDelay} indicate that the smaller value of $\varepsilon$ leads to faster convergence of ASY\_DPDP\_GNE exhibits, aligning with the discussions in Remark \ref{RemaImpac}.

\begin{figure}[!t]
   \centering
   \subfigure[The magnitude of active proportions.]{\includegraphics[width=1.5in]{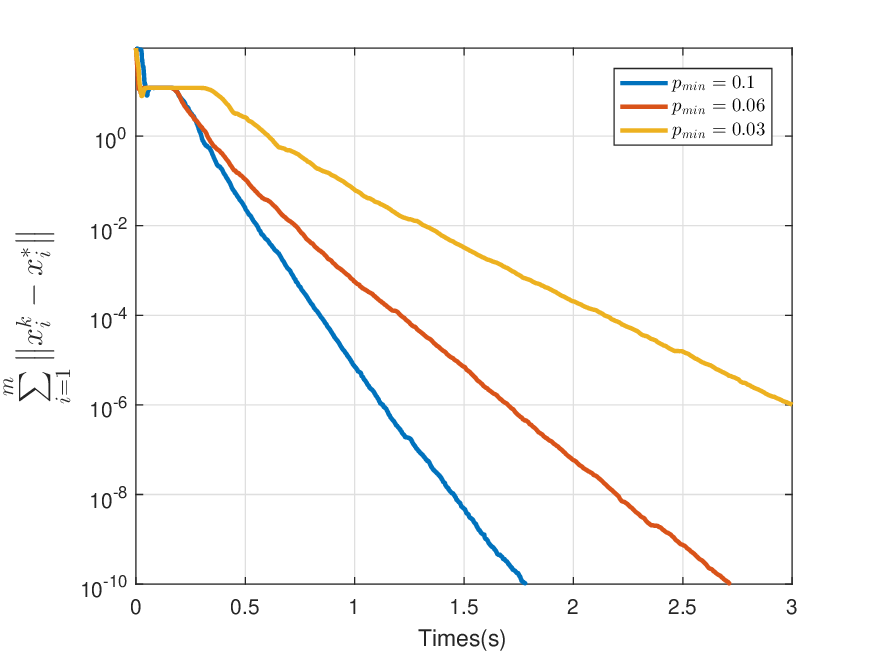} \label{DiffProp}}
   \hfil
   \subfigure[The magnitude of delays.]{\includegraphics[width=1.5in]{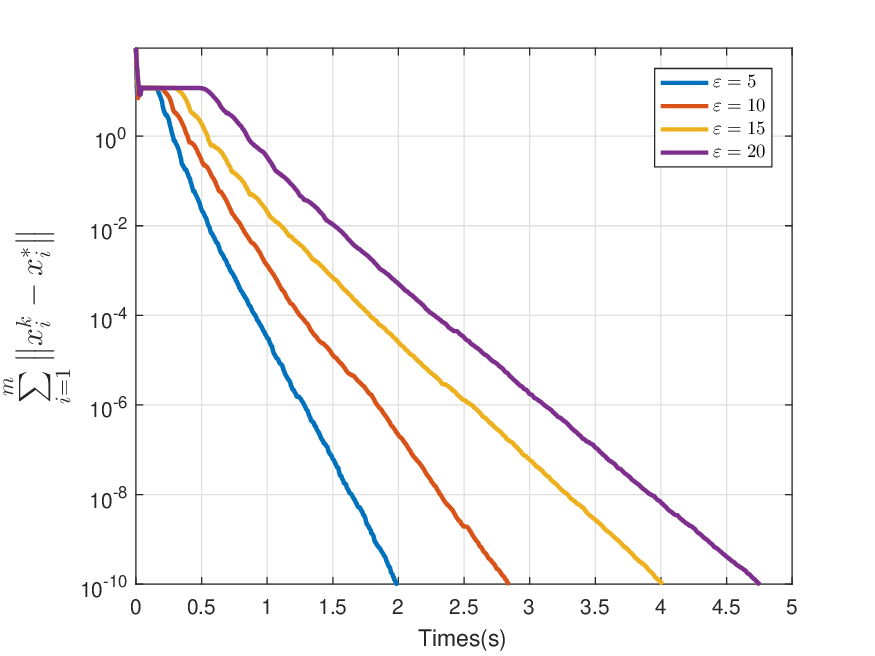} \label{DiffDelay}}
   \caption{Comparison of different considerations.}\label{StateDual}
\end{figure}

\emph{ 3) Comparisons with state-of-the-arts}: In this case, we compare ASY\_DPDP\_GNE with DPDP\_GNE, the Nesterov-based method in \cite{Zhaojian2021Energy,Guannan2020Accelerated} (SGNE for short), ADAGNES as well as its synchronous version (SYDNEY for short) in \cite{Peng2020Asynchronous}. To simulate a practical environment, the computation time of player $i$ is sampled from an exponential distribution $\exp (1/{\nu _i})$, where $\nu _i$ is set as $1 + |\nu |$ with $\nu$ following the standard normal distribution $N(0,1)$. The maximal delay is defined as $10$. The step-sizes of the comparative algorithms are carefully hand-tuned to achieve their best performance under their feasible selection conditions. Fig. \ref{ComWitAlgo} highlights that i) ASY\_DPDP\_GNE is much faster than ADAGNES, potentially  benefitting from the use of uncoordinated step-sizes; ii) ASY\_DPDP\_GNE significantly faster than synchronous counterparts. This is because ASY\_DPDP\_GNE does not require faster players to wait for slower ones, thereby eliminating the waiting time. Moreover, the synchronous DPDP\_GNE outperforms SGNE and SYDNEY, which is attributed to the theoretical results in Theorem \ref{Convergence} and Section \ref{DiscTheore}.

\begin{figure}[!t]
   \centering
   \subfigure[Evolutions of the residuals of dual variables.]{\includegraphics[width=1.5in]{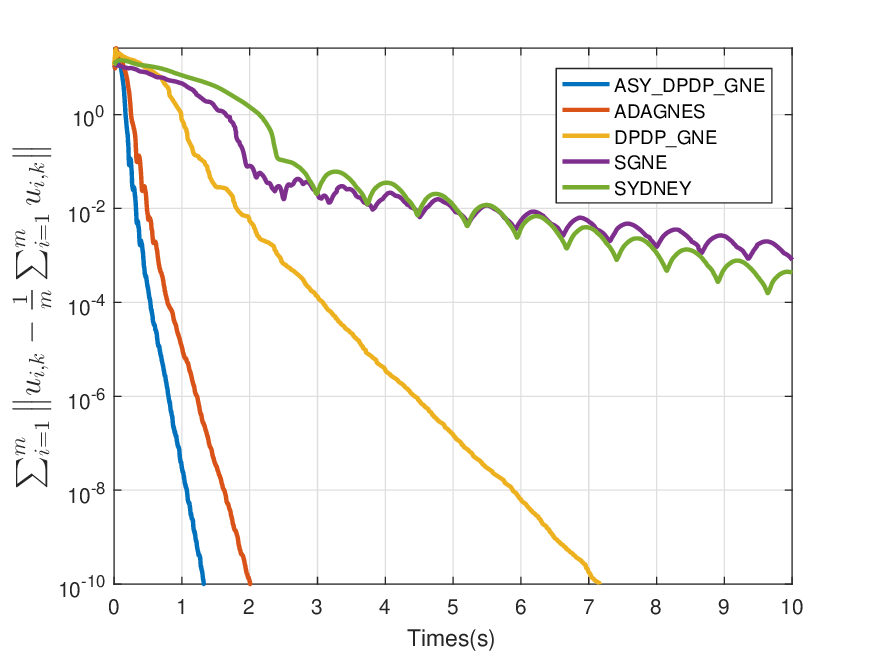} \label{CompDual}}
   \hfil
   \subfigure[Evolutions of the residuals of decision variables.]{\includegraphics[width=1.5in]{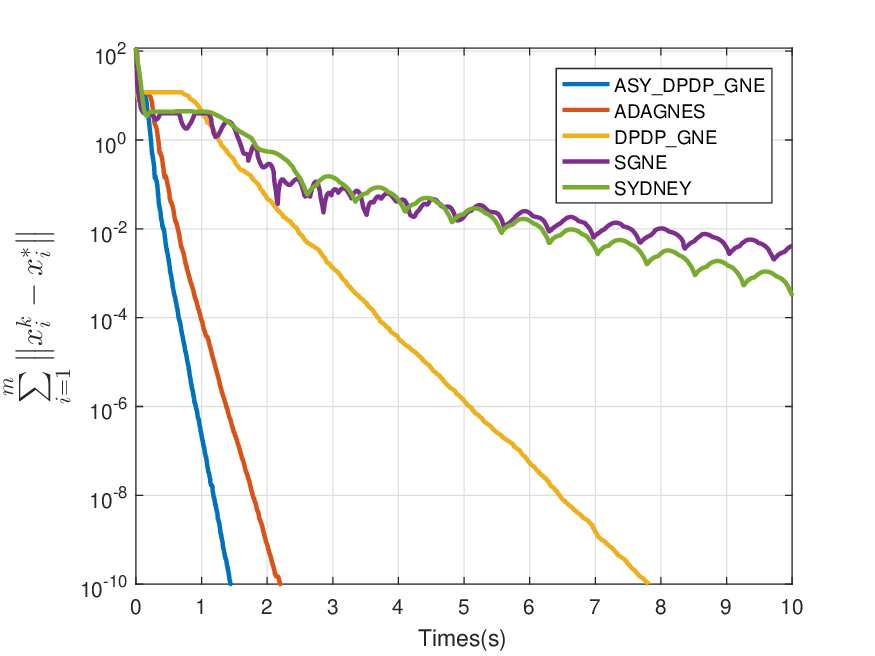} \label{DualPria}}
   \caption{Performance comparison with state-of-the-arts.}\label{ComWitAlgo}
\end{figure}

\section{Conclusion} \label{SecConcl}
This study developed DPDP\_GNE and its asynchronous version ASY\_DPDP\_GNE with delay communications for noncooperative games with coupled inequality constraints. Both proposed algorithms employ a new edge-based mode with distinct advantages compared with other edge-based methods. By means of operator-splitting techniques and the averaged operator theory, the convergence was explicitly derived under mild assumptions. Simulations are conducted to verify the performance of the proposed algorithms. It is necessary to investigate asynchronous distributed algorithms with partial-decision information in future work. In addition, games for the time-varying networks, distributed optimal power flow algorithm for radial communication networks are also potential topic worthy of careful consideration.

\begin{appendices}
\subsection{Proof of Lemma \ref{DistriMN}} \label{ProofDistriMN}
\begin{proof}
Firstly \eqref{DisMNcondita} $ \Rightarrow $ \eqref{VIKKTa} is shown. It follows from \eqref{DisMNconditc} that $w_{(i,j)}^* = {\text{pro}}{{\text{x}}_{{\kappa _{(i,j)}}\delta _{{\mathcal{C}_{(i,j)}}}^ \star }}( {w_{(i,j)}^* + {\kappa _{(i,j)}}{\Pi_{(i,j)}}{u^*}} )$, where ${\kappa _{(i,j)}}$ is a positive stepsize. Using Moreau decomposition \cite{Bauschke2011Convex} yields that ${\Pi_{(i,j)}}{u^*} = {\mathcal{P}_{{\mathcal{C}_{(i,j)}}}}( {\kappa _{(i,j)}^{ - 1}{{\hat w}_{(i,j)}} + {\Pi_{(i,j)}}{u^*}} )$, where the fact that ${\text{pro}}{{\text{x}}_{\kappa _{(i,j)}^{ - 1}{\delta _{{\mathcal{C}_{(i,j)}}}}}}\left( {{\text{Id}}} \right) = {\mathcal{P}_{{\mathcal{C}_{(i,j)}}}}\left( {{\text{Id}}} \right)$ is used. Note that the projection of $({z_1},{z_2}) \in {\mathbb{R}^n} \times {\mathbb{R}^n}$ onto ${{\mathcal{C}_{(i,j)}}}$ is derived as ${\mathcal{P}_{{\mathcal{C}_{(i,j)}}}}:({z_1},{z_2}) \to \frac{1}{2}({z_1} - {z_2},{z_2} - {z_1})$. Recalling $w_{(i,j)}^* = {\text{col}}\{ w_{(i,j),i}^*,w_{(i,j),j}^*\} $, $\Pi_{(i,j)}$ and above results becomes
\begin{flalign*}
\left[ {\begin{array}{*{20}{c}}
   {{E_{ij}}u_i^ *  + {E_{ij}}u_j^ * }  \\
   {{E_{ji}}u_j^ *  + {E_{ij}}u_i^ * }  \\
 \end{array} } \right] = \frac{1}
{{{\kappa _{(i,j)}}}}\left[ {\begin{array}{*{20}{c}}
   {w_{(i,j),i}^* - w_{(i,j),j}^*}  \\
   {w_{(i,j),j}^* - w_{(i,j),i}^*}  \\
 \end{array} } \right]
\end{flalign*}
Multiplying both sides of the above result by $1_{2q}^{\text{T}}$ obtains $w_{\left( {i,j} \right),i}^ *  = w_{\left( {i,j} \right),j}^ *$ and ${\text{ }}{E_{ij}}u_i^ *  + {E_{ji}}u_j^ *  = 0$. Recalling $E_{ij}$ in \eqref{EijEji}, we further have $u_i^* = u_j^* = u_g^*$, which combines with \eqref{DisMNcondita} to reach \eqref{VIKKTa}.

Next, \eqref{DisMNconditb} $ \Rightarrow $ \eqref{VIKKTb} is proved. It follows from $u_i^* = u_j^* = u_g^*$ that $u_1^* =  \cdots  = u_m^* = u_g^*$. Then, it holds that ${N_{\mathbb{R}_ + ^q}}(u_i^*) = {N_{\mathbb{R}_ + ^q}}(u_g^*)$ for each player $i$. Therefore, in \eqref{DisMNconditb}, there exist ${a_1},{\text{ }} \cdots ,{\text{ }}{a_m} \in {N_{\mathbb{R}_ + ^q}}(u_g^ * )$ such that $0 = \sum\nolimits_{i = 1}^m {{a_i}}  - \sum\nolimits_{i = 1}^m {\left( {{A_i}x_i^* - {b_i}} \right)}$, where we use $\sum\nolimits_{i = 1}^m {\sum\nolimits_{j \in {\mathcal{N}_i}} {{E_{ij}}w_{(i,j),i}^*} }  = 0$. Since ${a_i} \in {N_{\mathbb{R}_ + ^q}}(u_g^ * )$, $\sum\nolimits_{i = 1}^m {{a_i}}  \in {N_{\mathbb{R}_ + ^q}}(u_g^ * )$ holds, one can further have that $0 \in {N_{\mathbb{R}_ + ^q}}(u_g^*) - \sum\nolimits_{i = 1}^m {({A_i}x_i^* - {b_i})} $, which coincides with \eqref{VIKKTb}. Therefore, the pair $({x^ * },{u^ * })$ with ${u^ * } = {\text{col}}\{ u_1^*,{\text{ }} \cdots ,{\text{ }}u_m^*\}  = {1_{mq}} \otimes u_g^*$ meets the conditions \eqref{VIKKT}, and ${x^ * } \in {\rm{SOL}}\left( {\mathcal{X},F} \right)$.
\end{proof}

\subsection{Proof of Lemma \ref{PosiStepsizes}} \label{ProofPosiStepsizes}
\begin{proof}
The positive definiteness of ${T_{\tilde P}} - \left( {\Theta  + I} \right){T_S} $ can be equivalently expressed by the following inner product
\begin{flalign} \label{Poticefgd}
\langle {U,( {{T_{\tilde P}} - ( {\Theta  + I} ){T_S}} )U} \rangle  > 0,\forall U = {\text{col}}\{ x,u,w\}
\end{flalign}
Concretely, expanding the left-hand side of \eqref{Poticefgd} yields
\begin{flalign} \label{ProductUTS}
\begin{gathered}
  \quad \left\langle {U,\left( {{T_{\tilde P}} - \left( {\Theta  + I} \right){T_S}} \right)U} \right\rangle  \hfill \\
   = \left\| x \right\|_{\left( {I - {\Theta _1}} \right){\Gamma ^{ - 1}} - \frac{{L_F^2} }{2\mu}I}^2 + \left\| u \right\|_{\left( {I - {\Theta _2}} \right){\Sigma ^{ - 1}}}^2 \hfill \\
 \quad + \left\| w \right\|_{\left( {I - {\Theta _3}} \right){W^{ - 1}}}^2 - 2\left\langle {u,Ax} \right\rangle  \hfill \\
  \quad + 2\left\langle {w,\Pi \Sigma Ax} \right\rangle  - 2\left\langle {w,\Pi u} \right\rangle  \hfill \\
\end{gathered}
\end{flalign}
Under Assumption \ref{SynStepsizes}, these matrices ${\Gamma }$, ${\Sigma }$, ${W}$, $I - {\Theta _1}$, $I - {\Theta _2}$ and $I - {\Theta _3}$ are positive definite. Meanwhile, note that
\begin{flalign*}
\begin{gathered}
   \quad - 2\left\langle {u,Ax} \right\rangle + 2\left\langle {w,\Pi \Sigma Ax} \right\rangle - 2\left\langle {w,\Pi u} \right\rangle \hfill \\
   \ge - \frac{1}{4}{\| {\sqrt {{\Sigma ^{ - 1}}\left( {I - {\Theta _2}} \right)} u} \|^2} - 4{\| {\sqrt {\Sigma {{\left( {I - {\Theta _2}} \right)}^{ - 1}}} Ax} \|^2} \hfill \\
   \quad - \frac{1}{4}{\| {\sqrt {{W^{ - 1}}\left( {I - {\Theta _3}} \right)} w} \|^2} - \frac{3}{4}\left\| w \right\|_{\left( {I - {\Theta _3}} \right){W^{ - 1}}}^2 \hfill \\
   \quad - 4{\| {\sqrt {W{{\left( {I \!-\! {\Theta _3}} \right)}^{ - 1}}} \Pi \Sigma Ax} \|^2} \!-\! \frac{4}{3}{\| {\sqrt {W{{\left( {I \!-\! {\Theta _3}} \right)}^{ - 1}}} \Pi u} \|^2} \hfill \\
\end{gathered}
\end{flalign*}
Plugging the above inequalities into \eqref{ProductUTS}, we have $\langle {U,\left( {{T_{\tilde P}} - \left( {\Theta  + I} \right){T_S}} \right)U} \rangle  \ge \left\| x \right\|_X^2 + \frac{3}
{4}\left\| u \right\|_Y^2$, where $X = (I - {\Theta _1}){\Gamma ^{ - 1}} - \frac{{L_F^2} }{2\mu}I - 4{A^{\rm T}}\Sigma ({(I - {\Theta _2})^{ - 1}}+ {\Pi ^{\rm T}}W{(I - {\Theta _3})^{ - 1}}\Pi \Sigma )A$ and $Y = \frac{3}{4}\left( {I - {\Theta _2}} \right){\Sigma ^{ - 1}} - \frac{4}{3}{\Pi^{\rm T}}W{\left( {I - {\Theta _3}} \right)^{ - 1}}\Pi$. Since $\Gamma ,\Sigma ,W,{\Theta _1},{\Theta _2},{\Theta _3}$ and $A$ have diagonal structures and ${\Pi ^{\rm T}}\Pi  = {\rm{blkdiag}}\{ \sum\nolimits_{j \in {\mathcal{N}_i}} {E_{ij}^{\rm T}{E_{ij}}} \} _{i = 1}^m$, the upper bound of ${\tau _i}$ in Assumption \ref{SynStepsizes} implies that $\left\| x \right\|_X^2 >0$, and the upper bound of ${\sigma _i}$ means that $\left\| u \right\|_Y^2 >0$. As a result, the inequality \eqref{Poticefgd} holds.
\end{proof}

\subsection{Proof of Lemma \ref{LemOperTT}} \label{ProofLemOperTT}
\begin{proof}
For any $U$ and $Z$, it follows from \eqref{OperatorTT} that
\begin{flalign}\label{CompaTildeT}
 \begin{gathered}
  {T_S}( { TU - U} ) = ( {{T_H} - {T_M}} )( {\bar U - U} ) \hfill \\
  {T_S}( { TZ - Z} ) = ( {{T_H} - {T_M}} )( {\bar Z - Z} ) \hfill \\
\end{gathered}
\end{flalign}
where ${\bar U}$ and ${\bar Z}$ are respectively given by
\begin{flalign}\label{TildeUZ}
\begin{array}{l}
 \bar U = {\left( {{T_H} + {T_A}} \right)^{ - 1}}\left( {{T_H} - {T_M} - {T_C}} \right)U \\
 \bar Z = {\left( {{T_H} + {T_A}} \right)^{ - 1}}\left( {{T_H} - {T_M} - {T_C}} \right)Z \\
\end{array}
\end{flalign}
Since $T_M$ and $T_{\mathcal{K}}$ are skew-adjoint, the splitting scheme of $T_H$ in Definition \ref{AsumpOpertaot} gives that $\left\langle {U,\left( {{T_H} - {T_M}} \right)U} \right\rangle  \ge \left\| U \right\|_{{T_P}}^2$. Combining this result and the relationship \eqref{CompaTildeT}, we can derive
\begin{flalign}\label{UkjianUstar}
\begin{gathered}
  \quad \langle {U - Z,{\text{ }}{T_S}( { TU - U} ) - ( { TZ - Z} )} \rangle  \hfill \\
   = \left\langle {\bar U - \bar Z,\left( {{T_H} - {T_M}} \right)\left( {\left( {\bar U - U} \right) - \left( {\bar Z - Z} \right)} \right)} \right\rangle  \hfill \\
   \quad - \left\langle {\left( {\bar U \!-\! U} \right) \!-\! \left( {\bar Z \!-\! Z} \right),\left( {{T_H} \!-\! {T_M}} \right)\left( {\left( {\bar U \!-\! U} \right) \!-\! \left( {\bar Z \!-\! Z} \right)} \right)} \right\rangle  \hfill \\
   \leq  - \left\| {\left( {\bar U - U} \right) - \left( {\bar Z - Z} \right)} \right\|_{{T_P}}^2 \hfill \\
  \quad  + \left\langle {\bar U - \bar Z,{\text{ }}\left( {{T_H} - {T_M}} \right)\left( {\left( {\bar U - U} \right) - \left( {\bar Z - Z} \right)} \right)} \right\rangle  \hfill \\
\end{gathered}
\end{flalign}
Next we analyze the cross terms in \eqref{UkjianUstar}. We use \eqref{TildeUZ} to have
\begin{flalign*}
\begin{gathered}
  \quad \left\langle {\bar U - \bar Z,{\text{ }}\left( {{T_H} - {T_M}} \right)\left( {\left( {\bar U - U} \right) - \left( {\bar Z - Z} \right)} \right)} \right\rangle  \hfill \\
   =  \!-\! \left\langle {\bar U \!-\! \bar Z,\left( {{T_A}\bar U \!-\! {T_A}\bar Z} \right)} \right\rangle \!-\! \left\langle {\bar U \!-\! \bar Z,\left( {{T_M}\bar U \!-\! {T_M}\bar Z} \right)} \right\rangle  \hfill \\
  \quad - \left\langle {\bar U - \bar Z,\left( {{T_C}U - {T_C}Z} \right)} \right\rangle  \hfill \\
   \leq  \!-\! \left\langle {\bar U \!-\! \bar Z,\left( {{T_C}U \!-\! {T_C}Z} \right)} \right\rangle  \hfill \\
\end{gathered}
\end{flalign*}
where the inequality holds from the monotonicity of ${{T_M}}$  and ${{T_C}}$. Assumption \ref{AssumpseudoLipfucnti} indicates that for any $x  ,y \in {\mathbb{R}^n}$
\begin{flalign} \label{CompactLipco}
\left\langle {x - y,F\left( x \right) - F\left( y \right)} \right\rangle  \ge \frac{\mu }{{L_F^2}}{\left\| {F\left( x \right) - F\left( y \right)} \right\|^2}
\end{flalign}
On the other hand, the basic inequality $\langle {x,y} \rangle \leq \frac{1}{2}\left\| x \right\|_V^2 + \frac{1}{2}\left\| y \right\|_{{V^{ - 1}}}^2$ ($V \in {\mathbb{R}^{n \times n}}$ is positive definite) gives that
\begin{flalign}\label{fertfw}
\begin{gathered}
  \quad \left\langle {\left( {x - \bar x} \right) - \left( {y - \bar y} \right),F\left( x \right) - F\left( y \right)} \right\rangle  \hfill \\
   \leq \frac{\mu }{{L_F^2}} \left\| {F\left( x \right) - F\left( y \right)} \right\|^2 + \frac{{L_F^2} }{4\mu}\left\| {\left( {x - \bar x} \right) - \left( {y - \bar y} \right)} \right\|^2 \hfill \\
\end{gathered}
\end{flalign}
Combine \eqref{CompactLipco} with \eqref{fertfw} to have the upper bound of cross term: $\langle {\bar U - \bar Z,( {{T_H} - {T_M}} )( {( {\bar U - U} ) - ( {\bar Z - Z} )} )} \rangle  \leq L_F^2/( {4\mu } ){\| {( {x - \bar x} ) - ( {y - \bar y} )} \|^2}$. Therefore, \eqref{UkjianUstar} reduces to $\langle {U - Z,{\text{ }}{T_S}( { TU - U} ) - ( { TZ - Z} )} \rangle  \leq  - \| {( { TU - U} ) - ( {TZ - Z} )} \|_{{T_{\bar P}}}^2$, where the last term is derived from \eqref{CompaTildeT}. Note that $\langle {U - Z,{T_S}(TU - U) - (TZ - Z)} \rangle  = \| {TU - TZ} \|_{{T_S}}^2 - \| {U - Z} \|_{{T_S}}^2 - \| {({\rm{Id}} - T)U - ({\rm{Id}} - T)Z} \|_{{T_S}}^2$. Consequently, combining the above results and Lemma \ref{PosiStepsizes}, we can obtain that $\| { TU -  TZ} \|_{{T_S}}^2 - \| {U - Z} \|_{{T_S}}^2 \leq  - \| {( { TU - U} ) - ( { TZ - Z} )} \|_{\Theta {T_S}}^2$. Since $\Theta$ is a block diagonal matrix, we take its minimum eigenvalue as ${\alpha _{\min }} = \min \left\{ {{\alpha _1},{\text{ }}{\alpha _2},{\text{ }} \cdots ,{\text{ }}{\alpha _m}} \right\}$ and set $\left( {1 - \gamma } \right)/\gamma  = {\alpha _{\min }}$. Therefore, the result \eqref{tildeTAverage} holds.
\end{proof}

\subsection{Proof of Lemma \ref{ConditionQwang}} \label{ProfCondition}
\begin{proof}
Under Assumption \ref{AssumpProbibility}, we use \eqref{asyniter} and ${\eta _i} = \eta /(m{p_i})$ to derive the following conditional expectation:
\begin{flalign} \label{ConditrTUbar}
\begin{gathered}
  \quad \mathbb{E}(||{U_{k + 1}} - {U^*}||_{{T_S}}^2|{\mathcal{Z}_k}) \hfill \\
    =  \mathbb{E}(||{U_k} - {\eta _{{i_k}}}{{\overset{\lower0.5em\hbox{$\smash{\scriptscriptstyle\frown}$}}{R} }_{{i_k}}}{{\overset{\lower0.5em\hbox{$\smash{\scriptscriptstyle\frown}$}}{U} }_k} - {U^*}||_{{T_S}}^2|{\mathcal{Z}_k}) \hfill \\
  =  ||{U_k} - {U^*}||_{{T_S}}^2 + \frac{{{\eta ^2}}}
{{{m^2}}}\sum\limits_{i = 1}^m {\frac{1}{{{p_i}}}||{{\overset{\lower0.5em\hbox{$\smash{\scriptscriptstyle\frown}$}}{R} }_i}{{\overset{\lower0.5em\hbox{$\smash{\scriptscriptstyle\frown}$}}{U} }_k}||_{{T_S}}^2}  \hfill \\
  \quad + \frac{{2\eta }}{m}{\langle {{U^*} - {U_k},\overset{\lower0.5em\hbox{$\smash{\scriptscriptstyle\frown}$}}{R} {{\overset{\lower0.5em\hbox{$\smash{\scriptscriptstyle\frown}$}}{U} }_k}} \rangle _{{T_S}}} \hfill \\
\end{gathered}
\end{flalign}
For the second term in the last equality of \eqref{ConditrTUbar}, it is derived as
\begin{flalign} \label{HuRiUU}
\begin{gathered}
  \sum\limits_{i = 1}^m {\frac{1}
{{{p_i}}}||{{\overset{\lower0.5em\hbox{$\smash{\scriptscriptstyle\frown}$}}{R} }_i}{{\overset{\lower0.5em\hbox{$\smash{\scriptscriptstyle\frown}$}}{U} }_k}||_{{T_S}}^2} \leq \frac{{\kappa \left( {{T_S}} \right)}}
{{{p_{\min }}}}||\overset{\lower0.5em\hbox{$\smash{\scriptscriptstyle\frown}$}}{R} {{\overset{\lower0.5em\hbox{$\smash{\scriptscriptstyle\frown}$}}{U} }_k}||_{{T_S}}^2 \hfill \\
  \qquad\qquad\qquad\quad\,\,\, \mathop  = \limits^{\eqref{Fuzhuite}} \frac{{\kappa \left( {{T_S}} \right)}}
{{{\eta ^2}{p_{\min }}}}||{{\tilde U}_{k + 1}} - {U_k}||_{{T_S}}^2 \hfill \\
\end{gathered}
\end{flalign}
For the third term in the last equality of \eqref{ConditrTUbar}, we have
\begin{flalign} \label{HuRiuu}
\begin{gathered}
\quad {\langle {{U^*} \!-\! {U_k},\overset{\lower0.5em\hbox{$\smash{\scriptscriptstyle\frown}$}}{R} {{\overset{\lower0.5em\hbox{$\smash{\scriptscriptstyle\frown}$}}{U} }_k}} \rangle _{{T_S}}} \hfill \\
  = {\langle {\overset{\lower0.5em\hbox{$\smash{\scriptscriptstyle\frown}$}}{R} {{\overset{\lower0.5em\hbox{$\smash{\scriptscriptstyle\frown}$}}{U} }_k} \!-\! R{U^ * },{U^*} \!- \! {{\overset{\lower0.5em\hbox{$\smash{\scriptscriptstyle\frown}$}}{U} }_k}} \rangle _{{T_S}}}  \!+\! {\langle {\overset{\lower0.5em\hbox{$\smash{\scriptscriptstyle\frown}$}}{R} {{\overset{\lower0.5em\hbox{$\smash{\scriptscriptstyle\frown}$}}{U} }_k},\!\!\! \sum\limits_{d \in J\left( k \right)} {\left( {{U_d} \!-\! {U_{d + 1}}} \right)} } \rangle _{{T_S}}} \hfill \\
  =  - {\langle {{{\overset{\lower0.5em\hbox{$\smash{\scriptscriptstyle\frown}$}}{U} }_k} - {U^*},\overset{\lower0.5em\hbox{$\smash{\scriptscriptstyle\frown}$}}{R} {{\overset{\lower0.5em\hbox{$\smash{\scriptscriptstyle\frown}$}}{U} }_k} - \overset{\lower0.5em\hbox{$\smash{\scriptscriptstyle\frown}$}}{R} {U^*}} \rangle _{{T_S}}} \hfill \\
  \quad + \frac{1}{\eta }\sum\limits_{d \in J\left( k \right)} {{{\langle {{U_d} - {U_{d + 1}},{U_k} - {{\tilde U}_{k + 1}}} \rangle }_{{T_S}}}}  \hfill \\
   \leq  - \frac{1}{2}||\overset{\lower0.5em\hbox{$\smash{\scriptscriptstyle\frown}$}}{R} {{\overset{\lower0.5em\hbox{$\smash{\scriptscriptstyle\frown}$}}{U} }_k}||_{{T_S}}^2 \hfill \\
  \quad + \frac{1}{{2\eta }}\sum\limits_{d \in J\left( k \right)} {(\frac{1}
{c}||{U_k} \!-\! {{\tilde U}_{k + 1}}||_{{T_S}}^2 + c||{U_d} - {U_{d + 1}}||_{{T_S}}^2)}  \hfill \\
=  - \frac{1}{{2{\eta ^2}}}||{{\tilde U}_{k + 1}} - {U_k}||_{{T_S}}^2 + \frac{{|J\left( k \right)|}}
{{2c\eta }}||{{\tilde U}_{k + 1}} - {U_k}||_{{T_S}}^2 \hfill \\
  \quad + \frac{c}{{2\eta }}\sum\limits_{d \in J\left( k \right)} {||{U_d} - {U_{d + 1}}||_{{T_S}}^2}  \hfill \\
\end{gathered}
\end{flalign}
where the first equality comes from Lemma \ref{OIYRE} and also involves that $R{U^ * } = 0$, the second one follows from \eqref{Fuzhuite}. Meanwhile, the inequality is derived from \cite[Lemma 3]{Tianyu2018Decentralized} and the basic inequality like \eqref{fertfw} for some $c>0$. Substituting \eqref{HuRiUU} and \eqref{HuRiuu} into \eqref{ConditrTUbar}, it arrives at
\begin{flalign*}
\begin{gathered}
  \quad \mathbb{E}\left( {||{U_{k + 1}} - {U^*}||_{{T_S}}^2|{\mathcal{Z}_k}} \right) \hfill \\
   \leq ||{U_k} - {U^*}||_{{T_S}}^2 + \frac{{{\eta ^2}}}{{{m^2}}}\frac{{\kappa \left( {{T_S}} \right)}}{{{\eta ^2}{p_{\min }}}}||{{\tilde U}_{k + 1}} - {U_k}||_{{T_S}}^2 \hfill \\
   \quad - \frac{1}{m}||{{\tilde U}_{k + 1}} - {U_k}||_{{T_S}}^2 + \frac{{|J\left( k \right)|}}{{cm}}||{{\tilde U}_{k + 1}} - {U_k}||_{{T_S}}^2 \hfill \\
  \quad + \frac{c}{m}\sum\limits_{d \in J\left( k \right)} {||{U_d} - {U_{d + 1}}||_{{T_S}}^2}  \hfill \\
  = ||{U_k} - {U^*}||_{{T_S}}^2 + \frac{c}{m}\sum\limits_{d \in J\left( k \right)} {||{U_d} - {U_{d + 1}}||_{{T_S}}^2}  \hfill \\
  \quad - \frac{1}{m}\left( {\frac{1}{\eta } - \frac{{|J\left( k \right)|}}{c} - \frac{{\kappa ({T_S})}}{{m{p_{\min }}}}} \right)||{{\tilde U}_{k + 1}} - {U_k}||_{{T_S}}^2 \hfill \\
\end{gathered}
\end{flalign*}
Then, this completes the proof.
\end{proof}

\subsection{Proof of Theorem \ref{Fundainequa}} \label{ProfFundainequa}
\begin{proof}
Let $c = m\sqrt {{p_{\min }}/\kappa \left( {{T_S}} \right)}$ in \eqref{rtoq}. Then, compute the conditional expectation of $\left\| {{{\mathbf{U}}_{k + 1}} - {{\mathbf{U}}^ * }} \right\|_\Phi ^2$ to obtain that
\begin{flalign} \label{HeitiUStarU}
\begin{gathered}
  \quad \mathbb{E}\left( {||{{\mathbf{U}}_{k + 1}} - {{\mathbf{U}}^ * }||_\Phi ^2|{\mathcal{Z}_k}} \right) \hfill \\
   = \mathbb{E}\left( {||{U_{k + 1}} - {U^ * }||_{{T_S}}^2|{\mathcal{Z}_k}} \right) \hfill \\
  \quad + \sqrt {\frac{{{p_{\min }}}}{{\kappa \left( {{T_S}} \right)}}} \sum\limits_{d = k + 1 - \varepsilon }^{k - 1} {(d - (k - \tau ))||{U_d} - {U_{d + 1}}||_{{T_S}}^2}  \hfill \\
  \quad + \varepsilon \sqrt {\frac{{{p_{\min }}}}{{\kappa \left( {{T_S}} \right)}}} \mathbb{E}(||{U_k} - {U_{k + 1}}||_{{T_S}}^2|{\mathcal{Z}_k}) \hfill \\
  \mathop  \leq \limits^{\eqref{rtoq}} \left\| {{U_k} - {U^*}} \right\|_{{T_S}}^2 \hfill \\
  \quad + \sqrt {\frac{{{p_{\min }}}}{{\kappa \left( {{T_S}} \right)}}} \sum\limits_{d = k - \varepsilon }^{k - 1} {(d \!-\! (k \!-\! \varepsilon ) \!+\! 1)||{U_d} \!-\! {U_{d + 1}}||_{{T_S}}^2}  \hfill \\
  \quad - \frac{1}{m}\left( {\frac{1}{\eta } \!-\! \sqrt {\frac{{\kappa \left( {{T_S}} \right)}}{{{p_{\min }}}}} \frac{{|J\left( k \right)| \!+\! \varepsilon }}{m}} \right)||{{\tilde U}_{k + 1}} \!-\! {U_k}||_{{T_S}}^2 \hfill \\
  \quad + \varepsilon \sqrt {\frac{{{p_{\min }}}}{{\kappa \left( {{T_S}} \right)}}} \mathbb{E}(||{U_k} - {U_{k + 1}}||_{{T_S}}^2|{\mathcal{Z}_k}) \hfill \\
  \mathop  = \limits^{\eqref{QiWangHeitiU}} \left\| {{{\mathbf{U}}_k}\! -\! {{\mathbf{U}}^ * }} \right\|_\Phi ^2 \!+\! \varepsilon \sqrt {\frac{{{p_{\min }}}}
{{\kappa \left( {{T_S}} \right)}}} \mathbb{E}(||{U_k} \!-\! {U_{k + 1}}||_{{T_S}}^2|{\mathcal{Z}_k}) \hfill \\
  \quad - \frac{1}{m}\left( {\frac{1}{\eta } \!-\! \sqrt {\frac{{\kappa \left( {{T_S}} \right)}}{{{p_{\min }}}}} \frac{{|J\left( k \right)| \!+\! \varepsilon }}{m}} \right)||{{\tilde U}_{k + 1}} \!-\! {U_k}||_{{T_S}}^2 \hfill \\
\end{gathered}
\end{flalign}
On the one hand, we derive the conditional expectation of $||{U_k} - {U_{k + 1}}||_{{T_S}}^2$ as follows:
\begin{flalign*}
\begin{gathered}
  \quad \mathbb{E}\left( {||{U_k} - {U_{k + 1}}||_{{T_S}}^2|{\mathcal{Z}_k}} \right) \hfill \\
  \mathop  = \limits^{\eqref{Fuzhuite}} \mathbb{E}(||{\eta_{{i_k}}}{{\overset{\lower0.5em\hbox{$\smash{\scriptscriptstyle\frown}$}}{R} }_{{i_k}}}{{\overset{\lower0.5em\hbox{$\smash{\scriptscriptstyle\frown}$}}{U} }_k}||_{{T_S}}^2|{\mathcal{Z}_k}) = \sum\limits_{i = 1}^m {{p_i}\frac{{{\eta ^2}}}
{{{m^2}p_i^2}}||{{\overset{\lower0.5em\hbox{$\smash{\scriptscriptstyle\frown}$}}{R} }_i}{{\overset{\lower0.5em\hbox{$\smash{\scriptscriptstyle\frown}$}}{U} }_k}||_{{T_S}}^2}  \hfill \\
  \mathop  \leq \limits^{\eqref{HuRiUU}} \frac{1}{{{m^2}}}\frac{{\kappa \left( {{T_S}} \right)}}{{{p_{\min }}}}||{{\tilde U}_{k + 1}} - {U_k}||_{{T_S}}^2 \hfill \\
\end{gathered}
\end{flalign*}
On the other hand, Lemma \ref{OIYRE} gives that $\left| {J\left( k \right)} \right| \leq \varepsilon $. Hence, plugging the aforementioned result into \eqref{HeitiUStarU} yields the desired result \eqref{HeiUkZ}. Therefore, the convergence of ${\{ {\| {{{\mathbf{U}}_{k + 1}} - {{\mathbf{U}}^ * }} \|_\Phi ^2} \}_{k \in \mathbb{N}}}$ follows from \cite[Lemma 13]{Zhimin2016ARock}.
\end{proof}
\end{appendices}

\bibliographystyle{ieeetr}


\bibliography{paper}

\end{document}